\newcommand{\argmax}{\operatornamewithlimits{argmax}} 
\newtheoremstyle{dotless}{}{}{\itshape}{}{\bfseries}{}{ }{}
\theoremstyle{dotless}
\newtheorem{corollary}{Corollary}[section]
\newtheorem{theorem}{Theorem}[section]
\newtheorem{lemma}{Lemma}[section]
\newcommand{\R}{\mathbb{R}}
\newcommand{\E}{\mathbb{E}}
\newcommand{\PP}{\mathbb{P}}
\newcommand{\I}{\mathbb{I}}
\newcommand{\bs}{\boldsymbol}
\newcommand{\be}{\begin{equation}}
\newcommand{\ee}{\end{equation}}
\newcommand{\bea}{\begin{eqnarray}}
\newcommand{\eea}{\end{eqnarray}}
\newcommand{\bfl}{\begin{flalign}}
\newcommand{\efl}{\end{flalign}}
\newcommand{\bfc}{\begin{figure}\begin{center}}
\newcommand{\efc}{\end{center}\end{figure}}
\newcommand{\n}{\nonumber}
\newcommand{\dd}{\mathrm{d}}
\newcommand{\nin}{\noindent}
\newcommand{\mc}{\mathcal}
\newcommand{\ds}{\displaystyle}
\newcommand{\lan}{\langle}
\newcommand{\ran}{\rangle}
\renewcommand{\bar}{\overline}
\title{Average-Case Performance of Rollout Algorithms \\ for Knapsack Problems\thanks{Supported by NSF grant 1029603. The first author is supported in part by a NSF graduate research fellowship.}}
\author{%
Andrew Mastin\thanks{Department of Electrical Engineering and Computer Science, Laboratory for Information and Decision Systems, Massachusetts Institute of Technology, Cambridge, MA 02139. Corresponding author.  \texttt{mastin@mit.edu}}%
\and
Patrick Jaillet\thanks{Department of Electrical Engineering and Computer Science
, Laboratory for Information and Decision Systems,
Operations Research Center, Massachusetts Institute of Technology, Cambridge, MA 02139; \texttt{jaillet@mit.edu}}%
}
\date{}
\begin{document}

\maketitle

\begin{abstract}
Rollout algorithms have demonstrated excellent performance on a variety of dynamic and discrete optimization problems. Interpreted as an approximate dynamic programming algorithm, a rollout algorithm estimates the value-to-go at each decision stage by simulating future events while following a greedy policy, referred to as the base policy. While in many cases rollout algorithms are guaranteed to perform as well as their base policies, there have been few theoretical results showing additional improvement in performance. In this paper we perform a probabilistic analysis of the subset sum problem and knapsack problem, giving theoretical evidence that rollout algorithms perform strictly better than their base policies. Using a stochastic model from the existing literature, we analyze two rollout methods that we refer to as the \textit{consecutive rollout} and \textit{exhaustive rollout}, both of which employ a simple greedy base policy. For the subset sum problem, we prove that after only a single iteration of the rollout algorithm, both methods yield at least a $30\%$ reduction in the expected gap between the solution value and capacity, relative to the base policy. Analogous results are shown for the knapsack problem.
\end{abstract}

\smallskip
{ \small
\textbf{Keywords} Rollout algorithms, lookahead, knapsack problems, approximate dynamic programming
}

\clearpage

\section{Introduction}
Rollout algorithms provide a natural and easily implemented approach for approximately solving many discrete and dynamic optimization problems. Their motivation comes from problems that can be solved using classical dynamic programming, but for which determining the value function (or value-to-go function) is computationally infeasible. The rollout technique estimates these values by simulating future events while following a simple greedy/heuristic policy, referred to as the base policy. In most cases the rollout algorithm is ensured to perform as well as its base policy \cite{bertsekas97}. As shown by many computational studies, the performance is often much better than the base policy, and sometimes near optimal \cite{bertsekas99}.


Theoretical results showing a strict improvement of rollout algorithms over base policies have been limited to average-case asymptotic bounds on the breakthrough problem and a worst-case analysis of the knapsack problem \cite{bertsekas07,bertazzi12}. The latter work motivates a complementary study of rollout algorithms for knapsack-type problems from an average-case perspective, which we provide in this paper. Our goals are to give theoretical evidence for the utility of rollout algorithms and to contribute to the knowledge of problem types and features that make rollout algorithms work well. We anticipate that our proof techniques may be helpful in achieving performance guarantees on similar problems.


We use a stochastic model directly from the literature that has been used to study a wide variety of greedy algorithms for the subset sum problem \cite{bt91}. This model is extended in a natural manner for our analysis of the knapsack problem. We analyze two rollout techniques that we refer to as the \textit{consecutive rollout} and the \textit{exhaustive rollout}, both of which use the same base policy. The first algorithm sequentially processes the items and at each iteration decides if the current item should be added to the knapsack. During each iteration of the exhaustive rollout, the algorithm decides which one of the available items should be added to the knapsack. The base policy is a simple greedy algorithm that adds items until an infeasible item is encountered.

For both techniques, we derive bounds showing that the expected performance of the rollout algorithms is strictly better than the performance obtained by only using the base policy. For the subset sum problem, this is demonstrated by measuring the gap between the total value of packed items and capacity. For the knapsack problem, the difference between total profits of the rollout algorithm and base policy is measured. The bounds are valid after only a single iteration of the rollout algorithm and hold for additional iterations.

The organization of the paper is as follows. In the remainder of this section we review related work, and we introduce our notation in Section \ref{notation}. Section \ref{modelgreedy} describes the stochastic models in detail and derives important properties of the \textit{blind greedy} algorithm, which is the algorithm that we use for a base policy. Results for the consecutive rollout and exhaustive rollout are shown in Section \ref{consecutive} and Section \ref{exhaustive}, respectively; these sections contain the most important proofs used in our analysis. A conclusion is given in Section \ref{conclusion}. A list of symbols, omitted proofs, and an appendix with evaluations of integrals are provided in the supplementary material.

\subsection{Related work}
Rollout algorithms were introduced by Tesauro and Galperin as online Monte-Carlo search techniques for computer backgammon \cite{tesauro96}. The application to combinatorial optimization was formalized by Bertsekas, Tsitsiklis, and Wu \cite{bertsekas97}. They gave conditions under which the rollout algorithm is guaranteed to perform as well as its base policy, namely if the algorithm is \textit{sequentially consistent} or \textit{sequentially improving}, and presented computational results on a two-stage maintenance and repair problem. The application of rollout algorithms to approximate stochastic dynamic programs was provided by Bertsekas and Casta\~{n}on, where they showed extensive computational results on variations of the quiz problem \cite{bertsekas99}.  Rollout algorithms have since shown strong computational results on a variety of problems including vehicle routing, fault detection, and sensor scheduling \cite{secomandi01, tu02, li09}.

Beyond simple bounds derived from base policies, the only theoretical results given explicitly for rollout algorithms are average-case results for the breakthrough problem and worst-case results for the 0-1 knapsack problem \cite{bertazzi12,bertsekas07}. In the breakthrough problem, the objective is to find a valid path through a directed binary tree where some edges are blocked. If the free (non-blocked) edges occur with probability $p$, independent of other edges, a rollout algorithm has a $O(N)$ larger probability of finding a free path in comparison to a greedy algorithm \cite{bertsekas07}.  Performance bounds for the knapsack problem were recently shown by Bertazzi \cite{bertazzi12}, who analyzed the rollout approach with variations of the decreasing density greedy (DDG) algorithm as a base policy. The DDG algorithm takes the best of two solutions: the one obtained by adding items in order of non-increasing profit to weight ratio, as long as they fit, and the solution resulting from adding only the item with highest profit.  He demonstrated that from a worst-case perspective, running the first iteration of a rollout algorithm (specifically, what we will refer to as the exhaustive rollout algorithm) improves the approximation guarantee from $\frac{1}{2}$ (bound provided by the base policy) to $\frac{2}{3}$.

An early probabilistic analysis of the subset sum problem was given by d'Atri and Puech \cite{d'atri82}. Using a discrete version of the model used in our paper, they analyzed the expected performance of greedy algorithms with and without sorting. They showed an exact probability distribution for the gap remaining after the algorithms and gave asymptotic expressions for the probability of obtaining a non-zero gap. These results were refined by Pferschy, who gave precise bounds on expected gap values for greedy algorithms \cite{pferschy97}.

A very extensive analysis of greedy algorithms for the subset sum problem was given by Borgwardt and Tremel \cite{bt91}. They introduced the continuous model that we use in this paper and derived probability distributions of gaps for a variety of greedy algorithms. In particular, they showed performance bounds for a variety of prolongations of a greedy algorithm, where a different strategy is used on the remaining items after the greedy policy is no longer feasible. They also analyzed cases where items are ordered by size prior to use of the greedy algorithms.

In the area of probabilistic knapsack problems, Szkatula and Libura investigated the behavior of greedy algorithms, similar to the blind greedy algorithm used in our paper, for the knapsack problem with fixed capacity. They found recurrence equations describing the weight of the knapsack after each iteration and solved the equations for the case of uniform weights \cite{szkatula83}. In later work they studied asymptotic properties of greedy algorithms, including conditions for the knapsack to be filled almost surely as $n \rightarrow \infty$ \cite{szkatula87}. 

There has been some work on asymptotic properties of the decreasing density greedy algorithm for probabilistic knapsack problems. Diubin and Korbut showed properties of the asymptotical tolerance of the algorithm, which characterizes the deviation of the solution from the optimal value \cite{diubin03}. Similarly, Calvin and Leung showed convergence in distribution between the value obtained by the DDG algorithm and the value of the knapsack linear relaxation \cite{calvin03}.

\section{Notation}
\label{notation}
Before we describe the model and algorithms, we summarize our notation. Since we must keep track of ordering in our analysis, we use sequences in place of sets and slightly abuse notation to perform set operations on sequences. These operations will mainly involve index sequences, and our index sequences will always contain unique elements. Sequences will be denoted by bold letters. If we wish for $\bs{S}$ to be the increasing sequence of integers ranging from $2$ to $5$, we write $\bs{S} = \langle 2,3,4,5 \rangle$. We then have $2 \in \bs{S}$ while $1 \notin \bs{S}$. We also say that $\langle 2,5 \rangle \subseteq \bs{S}$ and  $\bs{S} \setminus \langle 3 \rangle = \langle 2,4,5\rangle$. The concatenation of sequence $\bs{S}$ with sequence $\bs{R}$ is denoted by $\bs{S}:\bs{R}$. If $\bs{R} = \langle 1,7 \rangle$ then $\bs{S}:\bs{R} = \langle 2,3,4,5,1,7 \rangle$. A sequence is indexed by an index sequence if the index sequence is shown in the subscript. Thus $\bs{a_S}$ indicates the sequence $\langle a_2, a_3, a_4, a_5 \rangle$. For a sequence to satisfy equality with another sequence, equality must be satisfied element by element, according to the order of the sequence. We use the notation $\bs{S^i}$ to denote the sequence $\bs{S}$ with item $i$ moved to the front of the sequence: $\bs{S^3} = \langle 3,2,4,5 \rangle$.

The notation $\PP(\cdot)$ indicates probability and $\E[\cdot]$ indicates expectation. We define $\bar{\E}[\cdot] := 1- \E[\cdot]$. For random variables, we will use capital letters to denote the random variable (or sequence) and lowercase letters to denote specific instances of the random variable (or sequence). The probability density function for a random variable $X$ is denoted by $f_X(x)$. For random variables $X$ and $Y$, we use $f_{X|Y}(x|y)$ to denote the conditional density of $X$ given the event $Y=y$. When multiple variables are involved, all variables on the left side of the vertical bar are conditioned on all variables on the right side of vertical bar. The expression $f_{X,Y|Z,W}(x,y|z,w)$ should be interpreted as $f_{(X,Y) | (Z,W)}((x,y)|(z,w))$ and not $f_{X,(Y|Z),W}(x,(y|z),w)$, for example. Events are denoted by the calligraphic font, such as $\mc{A}$, and the disjunction of two events is shown by the symbol $\vee$. We often write conditional probabilities of the form $\PP(\cdot|X=x,Y=y, \mc{A})$ as $\PP(\cdot | x,y, \mc{A})$. The notation $\mc{U}[a,b]$ indicates the density of a uniform random variable on interval $[a,b]$. The indicator function is denoted by $\I(\cdot)$ and the positive part of an expression is denoted by $(\cdot)_+$. Finally, we use the standard symbols for assignment ($\leftarrow$), definition ($:=$), the positive real numbers ($\R^+$), and asymptotic growth ($O(\cdot)$).


\section{Stochastic model and blind greedy algorithm}
\label{modelgreedy}
In the knapsack problem, we are given a sequence of items $\bs{I} = \langle 1,2,\ldots,n \rangle $ where each item $i \in \bs{I}$ has a weight $w_i \in \R^+$ and profit $p_i \in \R^+$. Given a knapsack with capacity $b \in \R^+$, the goal is to select a subset of items with maximum total profit such that the total weight does not exceed the capacity. This is given by the following integer linear program.
\be
\begin{array}{cll}
$max$ & \ds \sum_{i =1}^n p_i x_{i} & \\
$s.t.$ & \ds \sum_{i =1}^n w_{i} x_{i} \le b & \\
& x_{i} \in \{0,1\} & i = 1,\ldots,n. \\
\end{array}
\ee
For the subset sum problem, we simply have $p_i = w_i$ for all $i \in \bs{I}$. 

We use the stochastic subset sum model given by Borgwardt and Tremel \cite{bt91}, and a variation of this model for the knapsack problem. In their subset sum model, for a specified number of items $n$, item weights $W_i$ and the capacity $B$ are drawn independently from the following distributions:
\bea
W_i &\sim& \mc{U}[0,1],~~~i=1,\ldots,n, \n \\
B &\sim& \mc{U}[0,n].
\eea
Our stochastic knapsack model simply assigns item profits that are independently and uniformly distributed,
\bea
P_i &\sim& \mc{U}[0,1],~~~i=1,\ldots,n.
\eea
These values are also independent with respect to the weights and capacity.

For evaluating performance, we only consider cases where $\sum_{i=1}^n W_i > B$. In all other cases, any algorithm that tries adding all items is optimal. Since it is difficult to understand the stochastic nature of optimal solutions, we use $\E[B-\sum_{i \in \bs{S}} W_i|\sum_{i =1}^nW_i>B]$ as a performance metric for the subset sum problem, where $\bs{S}$ is the sequence of items selected by the algorithm of interest. This is the same metric used in \cite{bt91}, where they note with a simple symmetry argument that for all values of $n$,
\be
\PP \left (\sum_{i = 1}^n W_i > B \right ) = \frac{1}{2}.
\ee
For the knapsack problem, we directly measure the difference between the rollout algorithm profit and the profit given by the base policy, which we refer to as the gain of the rollout algorithm.

\algsetup{indent=2em}
\begin{algorithm}
\caption{\textsc{Blind-Greedy}}
\label{blindgreedyAlg}
\begin{algorithmic}[1]
\REQUIRE Item weight sequence  $\bs{w_I}$ where $\bs{I} = \langle 1,\ldots,n \rangle$, capacity $b$.
\ENSURE Feasible solution sequence $\bs{S}$, value $U$.
\STATE Initialize solution sequence $\bs{S} \leftarrow \langle \rangle$, remaining capacity $\bar{b} \leftarrow b$, and value $U \leftarrow 0$.
\FOR{$i=1$ to $n$ (each item)}
\IF{$w_i \le \bar{b}$ (item weight does not exceed remaining capacity)}
\STATE Add item $i$ to solution sequence, $\bs{S} \leftarrow \bs{S} : \langle i \ran$.
\STATE Update remaining capacity $\bar{b} \leftarrow \bar{b} - w_i$, and value $U \leftarrow U+p_i$.
\ELSE
\STATE Stop and return $\bs{S}$, $U$.
\ENDIF
\ENDFOR
\STATE Return $\bs{S}$, $U$.
\end{algorithmic}
\end{algorithm}


For both the subset sum problem and the knapsack problem, we use the \textsc{Blind-Greedy} algorithm, shown in Algorithm \ref{blindgreedyAlg}, as a base policy. The algorithm simply adds items (without sorting) until it encounters an item that exceeds the remaining capacity, then stops. Throughout the paper, we will sometimes refer to \textsc{Blind-Greedy} simply as the greedy algorithm.

\textsc{Blind-Greedy} may seem inferior to a greedy algorithm that first sorts the items by weight or profit to weight ratio and then adds them in non-decreasing value. Surprisingly, for the subset sum problem, it was shown in \cite{bt91} that the algorithm that adds items in order of non-decreasing weight (referred to as \textsc{Greedy 1S}) performs equivalently to \textsc{Blind-Greedy}. Of course, we cannot say the same about the knapsack problem. A greedy algorithm that adds items in decreasing profit to weight ratio is likely to perform much better. Applying our analysis to a sorted greedy algorithm requires work beyond the scope of this paper.

In analyzing \textsc{Blind-Greedy}, we refer to the index of the first item that is infeasible as the \textit{critical item}. Let $K$ be the random variable for the index of the critical item, where $K=0$ indicates that there is no critical item (meaning $\sum_{i=1}^n W_i \le B$). Equivalently, assuming $\sum_{i=1}^nW_i > B$, the critical item index satisfies
\bea
\sum_{i=1}^{K-1}W_i \le B < \sum_{i=1}^{K}W_i. \label{critdef}
\eea
We will refer to items with indices $i <K$ as packed items. We then define the \textit{gap} of \textsc{Blind-Greedy} as
\be
\label{gdef}
G := B - \sum_{i=1}^{K-1} W_i,
\ee
for $K>0$. The gap is relevant to both the subset sum problem and the knapsack problem. For the knapsack problem, we define the gain of the rollout algorithm as 
\be
Z := \sum_{i \in \bs{R}} P_i - \sum_{i = 1}^{K-1} P_i,
\label{gaindef}
\ee
where $\bs{R}$ is the sequence of items selected by the rollout algorithm. A central result of \cite{bt91} is the following, which does not depend on the number of items $n$.
\begin{theorem}[Borgwardt and Tremel, 1991]
Independent of the critical item index $K>0$, the probability distribution of the gap obtained by \textsc{Blind-Greedy} satisfies
\be
\PP \left ( G \le g \left \vert \sum_{i=1}^nW_i > B \right. \right) = 2g-g^2,~~ 0 \le g \le 1,
\ee
\be
\E \left [ G  \left \vert \sum_{i=1}^nW_i > B \right. \right ] = \frac{1}{3}.
\ee
\label{greedythm}
\end{theorem}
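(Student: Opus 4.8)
The plan is to fix the critical index at an arbitrary value $K = k$ (with $1 \le k \le n$), compute the conditional law of the gap $G$, and then observe that the answer does not depend on $k$. First I would condition on the entire weight sequence $\bs{w_I}$. Writing $s_j := \sum_{i=1}^j w_i$ for the partial sums (with $s_0 = 0$), the thresholds $0 = s_0 < s_1 < \cdots < s_n$ partition $[0, s_n]$ into consecutive blocks of lengths $w_1, \ldots, w_n$. By (\ref{critdef}) the event $\{K = k\}$ is exactly $\{s_{k-1} \le B < s_k\}$, i.e.\ $B$ lands in the $k$-th block, and by (\ref{gdef}) the gap $G = B - s_{k-1}$ is the offset of $B$ within that block.

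The key observation, which I expect to be the crux of the argument, is that each block sits entirely inside the support $[0,n]$ of $B$: since every weight is at most $1$ we have $s_k \le k \le n$ almost surely, so $[s_{k-1}, s_k) \subseteq [0,n]$ with no truncation at either endpoint. Because $B \sim \mc{U}[0,n]$, this yields two clean consequences conditional on $\bs{w_I}$: the block is hit with probability exactly $\PP(K = k \mid \bs{w_I}) = w_k / n$, and, conditioned on being hit, the offset $G$ is uniform on $[0, w_k)$. In particular the \emph{position} $s_{k-1}$ of the block --- and hence the Irwin--Hall law of $\sum_{i=1}^{k-1} W_i$ --- plays no role; only the \emph{length} $w_k$ of the critical block matters. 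This is precisely where the generous choice $B \sim \mc{U}[0,n]$ is used, and checking the no-truncation fact in the boundary case $k = n$ is the one place that needs care.

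Next I would remove the conditioning on the weights. The factor $w_k / n$ shows that conditioning on $\{K=k\}$ size-biases the critical weight: since $W_k \sim \mc{U}[0,1]$, its conditional density becomes $f_{W_k \mid K = k}(w) \propto w \cdot \I(0 \le w \le 1)$, which normalizes to $2w$ on $[0,1]$. This density is manifestly independent of $k$, which already delivers the independence-of-$K$ claim. Taking $\PP(K = k) = \E[W_k]/n = 1/(2n)$ and summing over $k = 1, \ldots, n$ recovers the stated $\PP(\sum_i W_i > B) = 1/2$ as a consistency check.

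Finally I would assemble the gap density as a mixture of the uniform offsets over the size-biased critical weight:
\be
f_{G \mid K = k}(g) = \int_g^1 \frac{1}{w}\,(2w)\,\dd w = 2(1-g), \qquad 0 \le g \le 1,
\ee
independently of $k$. Integrating gives $\PP(G \le g \mid \sum_i W_i > B) = \int_0^g 2(1-g')\,\dd g' = 2g - g^2$, and one further integration gives $\E[G \mid \sum_i W_i > B] = \int_0^1 g\,\cdot 2(1-g)\,\dd g = \tfrac{1}{3}$. The only genuine obstacle is the conceptual one noted above: recognizing that the critical block never reaches the boundary of $[0,n]$, so that the uniformity of $B$ collapses all dependence on $k$ and on the partial-sum distribution. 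Once that is in hand, the remaining steps are short integrations.
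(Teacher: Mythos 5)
Your proposal is correct and follows essentially the same route as the paper: the block-hitting argument giving $\PP(K=k\mid \bs{w_I}) = w_k/n$ is the paper's Lemma \ref{lemmakunif}, the size-biased density $2w$ for the critical weight is Lemma \ref{othersindep}, the conditional uniformity of $G$ on $[0,w_K]$ is Lemma \ref{greedygapcond}, and your final mixture integral $\int_g^1 \frac{1}{w}(2w)\,\dd w = 2(1-g)$ is exactly the paper's concluding computation. The only difference is organizational (the paper splits the steps into three lemmas, reused later), plus your explicit note that $s_k \le k \le n$ prevents boundary truncation, which the paper leaves implicit.
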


Many studies measure performance using an approximation ratio (bounding the ratio of the value obtained by some algorithm to the optimal value) \cite{kellerer04, bertazzi12}. While this metric is generally not tractable under the stochastic model, we can observe a simple lower bound on the ratio of expectations of the value given by \textsc{Blind-Greedy} to the optimal value, for the subset sum problem\footnote{The expected ratio, rather than the ratio of expectations, may be a better benchmark here, but is less tractable}. A natural upper bound on the optimal solution is $B$, and the solution value given by \textsc{Blind-Greedy} is equal to $B-G$. Thus by Theorem \ref{greedythm} and linearity of expectations, the ratio of expected values is at least $\frac{\E[B-G]}{\E[B]} = 1-\frac{2}{3n}$. For $n \ge 2$, this value is at least $\frac{2}{3}$, which is the best worst-case approximation ratio derived in \cite{bertazzi12}. A similar comparison for the knapsack problem is not possible because there is no simple bound on the expected optimal solution value.

We describe some important properties of the \textsc{Blind-Greedy} solution that will be used in later sections and that provide a proof of Theorem \ref{greedythm}. For the proofs in this section as well as other sections, it is helpful to visualize the \textsc{Blind-Greedy} solution sequence on the nonnegative real line, as shown in Figure \ref{vis}.

\begin{figure}[h]
\begin{center}
\includegraphics[scale=0.7]{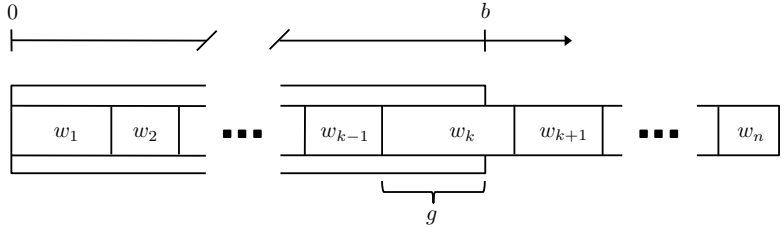}
\caption{Sequence given by \textsc{Blind-Greedy} on the nonnegative real line where $G=g$, $B=b$, and $\bs{W_S} = \bs{w_s}$. Each item $\ell = 1,\ldots,n$ occupies the interval $\left [ \sum_{i = 1}^{\ell-1}w_i, \sum_{i=1}^\ell w_i \right )$ and the knapsack is given on the interval $[0,b]$. The gap is the difference between the capacity and the total weight of the packed items. \label{vis}}
\end{center}
\end{figure}

Previous work on the stochastic model has demonstrated that the critical item index is uniformly distributed on $\{1,2,\ldots,n\}$ for cases of interest (i.e. $\sum_{i=1}^nW_i > B$) \cite{bt91}. In addition to this property, we show that the probability that a given item is critical is independent of weights of all other items\footnote{In other sections we follow the convention of associating the index $k$ with the random variable $K$. The index $\ell$ is used in this section to make the proofs clearer.}.
\begin{lemma}
\label{lemmakunif}
For each item $\ell =  1,\ldots,n$, for all subsequences of items $\bs{S} \subseteq \bs{I} \setminus \langle \ell \rangle$ and all weights $\bm{w_S}$, the probability that item $\ell$ is critical is
\be
\ds \PP(K=\ell | \bm{W_S} = \bm{w_S}) = \frac{1}{2n}.
\ee
\end{lemma}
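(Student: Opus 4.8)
The plan is to condition first on the entire weight sequence, use the independence of the capacity $B$ from the weights, and only afterwards average out the weights that are not being conditioned on.

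First I would rewrite the event $\{K=\ell\}$ purely in terms of the capacity. By the definition of the critical item in \eqref{critdef}, for $\ell \ge 1$ the event $\{K=\ell\}$ coincides exactly with
\[
\left\{ \sum_{i=1}^{\ell-1} W_i \le B < \sum_{i=1}^{\ell} W_i \right\}.
\]
Setting $S_{\ell} := \sum_{i=1}^{\ell} W_i$, this says precisely that $B$ lands in the interval $[S_{\ell-1}, S_\ell)$, whose length is exactly $W_\ell$. Since every weight lies in $[0,1]$ and there are at most $\ell \le n$ of them, we have $S_\ell \le \ell \le n$, so this interval sits entirely inside $[0,n]$.

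Next, because $B \sim \mc{U}[0,n]$ and is independent of all the weights, conditioning on the full weight sequence yields
\[
\PP(K=\ell \mid \bs{W_I} = \bs{w_I}) = \frac{S_\ell - S_{\ell-1}}{n} = \frac{w_\ell}{n}.
\]
The key structural observation is that this expression depends on the weights only through $w_\ell$: given every weight, the conditional probability that item $\ell$ is critical is a function of $W_\ell$ alone. Consequently, since $\ell \notin \bs{S}$ and the weights are mutually independent, $W_\ell$ is independent of $\bs{W_S}$, and the tower property gives
\[
\PP(K=\ell \mid \bs{W_S} = \bs{w_S}) = \E\left[ \frac{W_\ell}{n} \,\middle|\, \bs{W_S} = \bs{w_S} \right] = \frac{\E[W_\ell]}{n} = \frac{1}{2n}.
\]

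The argument is essentially a one-line conditioning once the right order of conditioning is chosen, so there is no real analytic obstacle. The one point that needs genuine care is the boundary check in the first step, namely confirming $[S_{\ell-1}, S_\ell) \subseteq [0,n]$ so that the probability equals the full length $W_\ell/n$ rather than a truncated version; this is exactly where the bound $\ell \le n$ together with the range of the weights must be invoked, and I expect it to be the only substantive thing to verify.
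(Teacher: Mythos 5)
Your proof is correct and takes essentially the same approach as the paper's: condition on the full weight sequence, observe that $B \sim \mc{U}[0,n]$ falls in the segment $\left[\sum_{i=1}^{\ell-1} w_i, \sum_{i=1}^{\ell} w_i\right)$ of length $w_\ell$ with probability $w_\ell/n$, note that this depends only on $w_\ell$, and then integrate over the uniform density of $W_\ell$. The only difference is presentational: you spell out the containment of the segment in $[0,n]$ and the tower-property/independence step, which the paper leaves implicit via its figure.
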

\begin{proof}
Assume that we are given the weights of all items $\bm{W_I} = \bm{w_I}$. We can divide the interval $[0,n]$ into $n+1$ segments as a function of item weights as shown in Figure \ref{vis}, so that the $\ell$th segment occupies the interval $\left [ \sum_{i = 1}^{\ell-1}w_i, \sum_{i=1}^\ell w_i \right )$ for $\ell=1,\ldots,n$ and the last segment is on $\left [ \sum_{i = 1}^{n}w_i, n\right ]$. The probability that item $\ell$ is critical is the probability that $B$ intersects the $\ell$th segment.
Since $B$ is distributed uniformly over the interval $[0,n]$, we have
\be
\PP(K=\ell|\bm{W_I}=\bm{w_I}) = \frac{w_\ell}{n},
\ee
showing that this event only depends on $w_\ell$. Integrating over the uniform density of $w_\ell$ gives the result.
\end{proof}
An important property of this stochastic model, which is key for the rest of our development, is that conditioning on the critical item index only changes the weight distribution of the critical item; all other item weights remain independently distributed on $\mc{U}[0,1]$.
\begin{lemma}
\label{othersindep}
For any critical item $K>0$ and any subsequence of items $\bs{S} \subseteq \bs{I} \setminus \langle K \rangle$, the weights $\bs{W_S}$ are independently distributed on $\mathcal{U}[0,1]$, and $W_K$ independently follows the distribution
\bea
f_{W_K}(w_K) = 2w_K,~~ 0 \le w_K \le 1.
\eea
\end{lemma}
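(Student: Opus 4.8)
The plan is to compute the full joint conditional density of all item weights given the event $\{K=\ell\}$ by a single application of Bayes' rule, and then simply read off its factorized form. Both ingredients needed are already available. The prior density of $\bs{W_I}$ is the product $\prod_{i=1}^n \I(0\le w_i \le 1)$, i.e. uniform on the unit cube, and the likelihood $\PP(K=\ell \mid \bs{W_I}=\bs{w_I}) = w_\ell/n$ was established inside the proof of Lemma \ref{lemmakunif}. Note that conditioning on $K=\ell \ge 1$ already encodes the relevant event $\sum_{i=1}^n W_i > B$, since $K>0$ is by definition equivalent to it, so no separate conditioning on $\sum_{i=1}^n W_i > B$ is required.

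Concretely, I would write
\[
f_{\bs{W_I}\mid K}(\bs{w_I}\mid \ell) = \frac{\PP(K=\ell \mid \bs{W_I}=\bs{w_I})\, f_{\bs{W_I}}(\bs{w_I})}{\PP(K=\ell)},
\]
and then substitute the three known quantities: the likelihood $w_\ell/n$, the uniform prior (equal to $1$ on $[0,1]^n$), and the normalizing constant $\PP(K=\ell)=1/(2n)$, which is exactly the marginal computed in Lemma \ref{lemmakunif} (equivalently, obtained by integrating the likelihood $w_\ell/n$ against the uniform density of $w_\ell$). These combine to give $f_{\bs{W_I}\mid K}(\bs{w_I}\mid \ell) = 2w_\ell$ on the cube $[0,1]^n$.

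The final step is to recognize the product structure $2w_\ell = (2w_\ell)\cdot \prod_{i\neq\ell}\I(0\le w_i\le 1)$. Because the conditional joint density of $\bs{W_I}$ given $K=\ell$ factors into a density depending only on $w_\ell$ times separate uniform densities in each remaining coordinate, the weights are mutually independent under this conditioning, the critical weight $W_\ell$ has density $2w_\ell$, and each other weight is $\mathcal{U}[0,1]$; restricting to any subsequence $\bs{S}\subseteq \bs{I}\setminus\langle\ell\rangle$ yields the stated claim. I do not anticipate a genuine obstacle here: once the coordinatewise likelihood $w_\ell/n$ from Lemma \ref{lemmakunif} is in hand, the result is a one-line posterior computation followed by reading off the factorization. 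The only point needing mild care is the observation that the likelihood depends on $\bs{w_I}$ solely through $w_\ell$, which is precisely what forces the posterior to factor and leaves the non-critical weights untouched.
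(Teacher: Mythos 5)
Your proposal is correct and follows essentially the same route as the paper: both apply Bayes' rule with the likelihood $\PP(K=\ell\mid \bs{W_I}=\bs{w_I})=w_\ell/n$ and the marginal $\PP(K=\ell)=1/(2n)$ from Lemma \ref{lemmakunif}, obtaining a posterior proportional to $2w_\ell$ times the uniform prior on the remaining coordinates, from which independence and the stated marginals are read off. The only cosmetic difference is that you compute the posterior of the full weight vector and then factor it, while the paper writes the same computation with the coordinates already split as $(\bs{W_S}, W_\ell)$.
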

\begin{proof}
For any item $\ell = 1,\ldots,n$, consider the subsequence of items $\bs{S} = \bs{I} \setminus \langle \ell \rangle$.  Using Bayes' theorem, the conditional joint density for $\bm{W_S}$ is given by
\bea
f_{\bm{W_S},W_\ell|K}(\bm{w_S},w_\ell | \ell) &=&\frac{\PP(K=\ell| \bm{W_S} = \bm{w_S},W_\ell = w_\ell)}{\PP(K=\ell)} f_{\bm{W_S}}(\bm{w_S})f_{W_\ell}(w_\ell) \n \\
&=& \frac{w_\ell/n}{1/(2n)}f_{\bm{W_S}}(\bm{w_S}) \n \\
&=& 2 w_\ell f_{\bm{W_S}}(\bm{w_S}), \quad 0 \le w_\ell \le 1,
\eea
where we have used the results of Lemma \ref{lemmakunif}. This holds for the $K=\ell$ and $\ell = 1,\ldots,n$, so we replace the index $\ell$ with $K$ in the expression.
\end{proof}

\nin We can now analyze the gap obtained by \textsc{Blind-Greedy} for $K>0$. This gives the following lemma and a proof of Theorem \ref{greedythm}.
\begin{lemma}
Independent of the critical item index $K>0$, the conditional distribution of the gap obtained by \textsc{Blind-Greedy} satisfies
\be
f_{G|W_K}(g|w_K) = \mc{U}[0,w_K].
\ee
\label{greedygapcond}
\end{lemma}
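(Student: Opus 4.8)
The plan is to read off the result from the geometric picture in Figure \ref{vis}. Once every weight is fixed, the interval $[0,n]$ is partitioned into the $n+1$ deterministic segments described in the proof of Lemma \ref{lemmakunif}, and by the definitions (\ref{critdef}) and (\ref{gdef}) the gap $G$ is exactly the offset of the capacity $B$ inside whichever segment it lands in. The event $\{K=\ell\}$ is precisely the event that $B$ lands in the $\ell$th segment $\left[\sum_{i=1}^{\ell-1}w_i,\sum_{i=1}^{\ell}w_i\right)$, which has length $w_\ell$. Since $B$ is uniform on $[0,n]$, the standard fact that a uniform variable restricted to a subinterval is uniform on that subinterval shows that, conditioned on $\{K=\ell\}$, $B$ is uniform on the $\ell$th segment, and hence $G = B - \sum_{i=1}^{\ell-1}w_i$ is uniform on $[0,w_\ell]$.

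First I would note, directly from (\ref{critdef}) and (\ref{gdef}), that $0 \le G < W_K$ always holds, so the claimed support $[0,w_K]$ is correct. Next I would condition on the full weight sequence $\bs{W_I}=\bs{w_I}$ together with $K=\ell$. With the weights frozen the segment endpoints are deterministic, and the restriction argument above gives
\be
f_{G|\bs{W_I},K}(g|\bs{w_I},\ell) = \frac{1}{w_\ell}, \quad 0 \le g \le w_\ell,
\ee
that is, $\mc{U}[0,w_\ell]$. The crucial observation is that this conditional density depends on the weight sequence only through $w_\ell = w_K$, and not at all on the other weights $\bs{w_S}$ with $\bs{S}=\bs{I}\setminus\langle\ell\rangle$.

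Because the inner density is constant in $\bs{w_S}$, I would then marginalize those weights out: writing $f_{G|W_K,K}(g|w_K,\ell)$ as the integral of $f_{G|\bs{W_I},K}$ against the conditional density of $\bs{W_S}$ given $W_K=w_K$ and $K=\ell$, the constant $1/w_K$ factors out and the remaining density integrates to one. This yields $f_{G|W_K,K}(g|w_K,\ell)=\mc{U}[0,w_K]$ for every $\ell$, so the distribution does not depend on $K$ and equals $f_{G|W_K}(g|w_K)$, proving the lemma. (Lemma \ref{othersindep} is convenient for describing the conditional law of $\bs{W_S}$ here, though its full strength is not needed, since the inner density does not involve $\bs{w_S}$.)

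The main obstacle is making this conditioning rigorous: the segment whose offset defines $G$ has \emph{random} endpoints, so one cannot speak of ``$B$ uniform on the critical segment'' before the weights are frozen; conditioning on $\bs{W_I}=\bs{w_I}$ prior to invoking the uniform-restriction fact cleanly sidesteps this. Finally, I would record how this lemma proves Theorem \ref{greedythm}: combining it with the critical-weight density $f_{W_K}(w_K)=2w_K$ from Lemma \ref{othersindep} and integrating over the admissible range $w_K \ge g$ gives
\be
f_G(g) = \int_g^1 \frac{1}{w_K}\,2w_K\,\dd w_K = 2(1-g), \quad 0 \le g \le 1,
\ee
from which $\PP(G\le g)=2g-g^2$ and $\E[G]=\tfrac{1}{3}$ follow at once.
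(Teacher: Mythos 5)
Your proof is correct and follows essentially the same route as the paper's: condition on the full weight sequence and on $\{K=\ell\}$, invoke the fact that a uniform random variable restricted to an interval is uniform on that interval to get the posterior of $B$ on the critical segment, and observe that the resulting density of $G$ depends only on $w_\ell$. The only difference is that you spell out the marginalization over the other weights (and append the derivation of Theorem \ref{greedythm}), steps the paper leaves implicit or handles separately.
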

\begin{proof}
For any $\ell = 1,\ldots,n$ and any $\bm{W_I} = \bm{w_I}$, the posterior distribution of $B$ given the event $K=\ell$ satisfies
\be
f_{B|\bm{W_I},K}(b|\bm{w_I},\ell) = \mc{U}\left [\sum_{i=1}^{\ell-1}w_i,\sum_{i=1}^{\ell} w_i \right ],
\ee 
since we have a uniform random variable $B$ that is conditionally contained in a given interval. Now using the definition of $G$ in \eqref{gdef},
\be
f_{G|W_\ell,K}(g|w_\ell,\ell) = \mathcal{U}[0,w_\ell].
\ee
\end{proof}
\nin\textit{Proof of Theorem \ref{greedythm}}. Using Lemma \ref{greedygapcond} and the distribution for $W_K$ from Lemma \ref{othersindep}, we have for $K>0$,
\be
f_{G}(g) = \int_0^1 f_{G|W_K}(g|w_K)f_{W_K}(w_K) \dd w_K= \int_{g}^1 \frac{1}{w_K} 2 w_K \mathrm{d} w_K = 2-2g,
\ee
where we have used that $G \le W_K$ with probability one. This serves as a simpler proof of the theorem from \cite{bt91}; their proof is likely more conducive to their analysis.
\qed

Finally, we need a modified version of Lemma \ref{othersindep}, which will be used in the subsequent sections.

\begin{lemma}
Given any critical item $K>0$, gap $G=g$, and any subsequence of items $\bs{S} \subseteq \bs{I} \setminus \lan K \ran$, the weights $\bs{W_S}$ are independently distributed on $\mathcal{U}[0,1]$, and $W_K$ is independently distributed on $\mc{U}[g,1]$.
\label{exhindep}
\end{lemma}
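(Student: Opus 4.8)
The plan is to extend Lemma \ref{othersindep} by folding in the extra conditioning on $G=g$ via a direct conditional-density computation. The crucial structural fact I would establish first is that the gap is conditionally independent of the non-critical weights given $W_K$ and $K$. This is essentially contained in the proof of Lemma \ref{greedygapcond}: conditioning on $\bs{W_I}=\bs{w_I}$ and $K=\ell$ leaves $B$ uniform on $\left[\sum_{i=1}^{\ell-1}w_i,\sum_{i=1}^{\ell}w_i\right]$, so $G=B-\sum_{i=1}^{\ell-1}w_i$ is uniform on $[0,w_\ell]$, a distribution depending on the weight configuration only through $w_\ell=w_K$. Hence $f_{G|\bs{W_S},W_K,K}(g|\bs{w_S},w_K,K)=f_{G|W_K,K}(g|w_K)=\frac{1}{w_K}\I(0\le g\le w_K)$.

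Next I would take $\bs{S}=\bs{I}\setminus\lan K\ran$ to be the full complement and assemble the joint density of $(\bs{W_S},W_K,G)$ given $K$ by multiplying this conditional gap density by the joint weight density from Lemma \ref{othersindep}, namely $2w_K\,\I(0\le w_K\le 1)\prod_{i\in\bs{S}}\I(0\le w_i\le 1)$. The factor $1/w_K$ from the gap density cancels the $2w_K$ in the weight density, leaving $2\,\I(g\le w_K\le 1)\prod_{i\in\bs{S}}\I(0\le w_i\le 1)$.

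Finally, I would condition on $G=g$ by dividing through by the marginal gap density $f_{G|K}(g)=2-2g$ supplied by Theorem \ref{greedythm} (which is independent of $K$). The leading factor of $2$ cancels and the expression factorizes as $\frac{\I(g\le w_K\le 1)}{1-g}\prod_{i\in\bs{S}}\I(0\le w_i\le 1)$, which is exactly the product of a $\mc{U}[g,1]$ density for $W_K$ with independent $\mc{U}[0,1]$ densities for the remaining weights. The statement for an arbitrary subsequence $\bs{S}\subseteq\bs{I}\setminus\lan K\ran$ then follows by marginalizing out the excluded weights, each of which integrates to one and thus preserves the product form.

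The main obstacle, to the extent there is one, is the conditional-independence step: one must verify that the gap depends on the entire weight configuration only through $W_K$ once $K$ is fixed, so that the product of the two densities normalizes to a genuine product rather than leaving residual coupling between $W_K$ and the other weights. Everything after that is bookkeeping of the indicator supports and the clean cancellation of the $w_K$ factor.
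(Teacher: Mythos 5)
Your proposal is correct and follows essentially the same route as the paper's proof: both hinge on the observation that $f_{G|\bs{W_S},W_K,K}(g|\bs{w_S},w_K,\ell)=\mc{U}[0,w_K]$ (the argument of Lemma \ref{greedygapcond}), then apply Bayes' theorem with the factored weight density from Lemma \ref{othersindep} and the marginal gap density $2-2g$ from Theorem \ref{greedythm}, with the identical cancellation of the $w_K$ factors yielding $\frac{1}{1-g}f_{\bs{W_S}}(\bs{w_S})$ on $g\le w_K\le 1$. The only cosmetic difference is that you work with the full complement $\bs{S}=\bs{I}\setminus\lan K\ran$ and marginalize afterward, whereas the paper carries an arbitrary subsequence $\bs{S}$ through the same computation directly.
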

\begin{proof}
Fix $K=\ell$ for any $\ell>0$. The statement of the Lemma is equivalent to the expression
\bea
f_{\bs{W_S},W_\ell|G,K}(\bs{w_S}, w_\ell|g,\ell) = \frac{1}{1-g} f_{\bs{W_S}}(\bs{w_S}),\quad g \le w_\ell \le 1.
\eea
Note that
\bea
f_{G|\bs{W_S}, W_\ell,K}(g|\bs{w_s}, w_\ell, \ell) = \mc{U}[0,w_\ell], \label{jordan}
\eea
which can be shown by the same argument for Lemma \ref{greedygapcond}. Then,
\bea
f_{\bs{W_S},W_\ell|G,K}(\bs{w_S},w_\ell|g,\ell) &=& \frac{f_{G|\bs{W_S}, W_\ell, K}(g|\bs{w_S},w_\ell,\ell) f_{\bs{W_S},W_\ell|K}(\bs{w_S},w_\ell|\ell)}{f_{G|K}(g|\ell)} \n \\
&=& \frac{f_{G|\bs{W_S}, W_\ell, K}(g|\bs{w_S},w_\ell,\ell) f_{\bs{W_S}}(\bs{w_S}) f_{W_\ell|K}(w_\ell|\ell)}{f_{G|K}(g|\ell)} \n \\
&=& \frac{1}{w_\ell} \frac{2w_\ell}{2-2g}  f_{\bs{W_S}}(\bs{w_S}), \quad g \le w_\ell \le 1, 
\eea
where we have used Lemma \ref{othersindep}, \eqref{jordan}, and Theorem \ref{greedythm}.
\end{proof}

\section{Consecutive rollout}
\label{consecutive}

The \textsc{Consecutive-Rollout} algorithm is shown in Algorithm \ref{consecRollout}. The algorithm takes as input a sequence of item weights $\bs{w_I}$ and capacity $b$, and makes calls to \textsc{Blind-Greedy} as a subroutine. At iteration $i$, the algorithm calculates the value ($U_+$) of adding item $i$ to the solution and using \textsc{Blind-Greedy} on the remaining items, and the value  ($U_-$) of not adding the item to the solution and using \textsc{Blind-Greedy} thereafter. The item is then added to the solution only if the former valuation ($U_+$) is larger.

\algsetup{indent=2em}
\begin{algorithm}
\caption{\textsc{Consecutive-Rollout}}
\label{consecRollout}
\begin{algorithmic}[1]
\REQUIRE Item weight sequence $\bs{w_I}$ where $\bs{I} = \langle 1, \ldots,n \rangle$, capacity $b$.
\ENSURE Feasible solution sequence $\bs{S}$, value $U$.
\STATE Initialize $\bs{S} \leftarrow \langle \rangle$, remaining item sequence $\bs{\bar{I}} \leftarrow \bs{I}$, $\bar{b} \leftarrow b$, $U \leftarrow 0$.
\FOR{$i=1$ to $n$ (each item)}
\STATE Estimate the value of adding item $i$, $(\cdot,~U_+)$ = \textsc{Blind-Greedy}($\bs{\bar{I}}$,$~\bar{b}$).
\STATE Estimate the value of skipping item $i$, $(\cdot,~U_-)$ = \textsc{Blind-Greedy}($\bs{\bar{I}} \setminus \langle i\rangle$,$~\bar{b}$).
\IF{$U_{+} > U_{-}$ (estimated value of adding the item is larger)}
\STATE Add item $i$ to solution sequence, $\bs{S} \leftarrow \bs{S}: \lan i \ran$.
\STATE Update remaining capacity, $\bar{b} \leftarrow \bar{b} - w_i$, and value, $U \leftarrow U+p_i$.
\ENDIF
\STATE Remove item $i$ from the remaning item sequence, $\bs{\bar{I}} \leftarrow \bs{\bar{I}} \setminus \lan i \ran$.
\ENDFOR
\STATE Return $\bs{S}$, $U$.
\end{algorithmic}
\end{algorithm}

We only focus on the result of the first iteration of the algorithm; bounds from the first iteration are valid for future iterations\footnote{The technical condition for this property to hold is that the base policy/algorithm is sequentially consistent, as defined in \cite{bertsekas97}. It is easy to verify that \textsc{Blind-Greedy} satisfies this property.}. A single iteration of \textsc{Consecutive-Rollout} effectively takes the best of two solutions, the one obtained by \textsc{Blind-Greedy} and the solution obtained from using \textsc{Blind-Greedy} after removing the first item. Let $V_*(n)$ denote the gap obtained by a single iteration of the rollout algorithm for the subset sum problem with $n$ items under the stochastic model.
\begin{theorem}
For the subset sum problem with $n\ge3$, the gap $V_*(n)$ obtained by running a single iteration of \textsc{Consecutive-Rollout} satisfies
\be
\E \left [ V_*(n) \left \vert \sum_{i=1}^nW_i>B \right. \right ] \le \frac{3+13n}{60n} \le \frac{7}{30} \approx 0.233. \label{ss_consec_eq}
\ee
\label{ss_consec_thm}
\end{theorem}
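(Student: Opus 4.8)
The plan is to exploit the paper's own description of a single iteration: \textsc{Consecutive-Rollout} returns the better of two \textsc{Blind-Greedy} solutions, so $V_*(n)$ equals $\min(G,G')$, where $G$ is the gap of \textsc{Blind-Greedy} on all of $\bs{I}$ and $G'$ is the gap of \textsc{Blind-Greedy} on $\bs{I}\setminus\langle 1\rangle$ with the same capacity. Since the statement is an upper bound, it suffices to control $\E[\min(G,G')\mid \sum_i W_i>B]$. I would condition throughout on the critical item $K$, which is uniform on $\{1,\ldots,n\}$ by Lemma \ref{lemmakunif}, and on the gap $G=g$, whose density is $2-2g$ by Theorem \ref{greedythm}. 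By Lemma \ref{exhindep}, this conditioning leaves $W_1\sim\mc{U}[0,1]$ (when $K\neq1$), $W_K\sim\mc{U}[g,1]$, and all weights $W_i$ with $i>K$ independent and $\mc{U}[0,1]$; this last independence is exactly what makes $G'$ tractable.

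The key geometric observation, read off Figure \ref{vis}, is that deleting item $1$ and re-running greedy is the same as sliding the capacity marker from $b$ to $b+W_1$ in the original layout while ignoring the first interval. Tracking the position of the marker inside the item interval containing it (zero at the left endpoint, rising with unit slope) gives a clean dichotomy for $K\ge 2$: if $W_1<W_K-g$ the marker stays inside item $K$'s interval, so $G'=g+W_1\ge g$ and $\min(G,G')=g$ (the rollout cannot improve); if $W_1\ge W_K-g$ the marker crosses item $K$, and by the independence of the later weights, $G'$ is distributed as the gap $G'_s$ of a \emph{fresh} greedy run with capacity $s:=W_1-(W_K-g)$ on i.i.d.\ $\mc{U}[0,1]$ items. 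The case $K=1$ is the degenerate version with $b=g$, giving $\min(G,G')=G'_g$ outright. Everything thus reduces to understanding the fresh gap $G'_s$ for $s\in[0,1]$.

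For the fresh gap I would set up the renewal/integral equation obtained by conditioning on the first item: writing $h(s):=\E[G'_s]$ yields $h(s)=s-s^2+\int_0^s h(u)\,\dd u$, with solution $h(s)=1+2s-e^s$, and more generally the survival function $\PP(G'_s>x)=1-x\,e^{s-x}$ for $0\le x<s$, together with an atom of mass $1-s$ at $x=s$ from the first item overflowing. This lets me compute $\E[\min(g,G'_s)]=\int_0^g\PP(G'_s>x)\,\dd x$ exactly in the two regimes $g\ge s$ (where it equals $h(s)$) and $g<s$ (where the min genuinely caps the fresh gap). I would then integrate this against the densities of $W_1\sim\mc{U}[0,1]$ and $W_K\sim\mc{U}[g,1]$ (equivalently, over $s$), against the gap density $2-2g$, and average the resulting conditional expectation over the uniform $K$.

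The main obstacle is twofold. First, the capping region $g<s$ forces use of the full law of $G'_s$, not merely its mean, so the nested integrals carry $e^{s-x}$ factors and must be organized carefully across the two regimes. Second, and more delicate, is the finite-$n$ bookkeeping: the idealized $G'_s$ presumes an unlimited supply of items, whereas when $K=k$ only the $n-k$ items $\langle k+1,\ldots,n\rangle$ are available, so the true gap can only exceed $G'_s$, and does so precisely when these items run out (equivalently, when the marker $b+W_1$ overshoots $\sum_i W_i$). I would treat the unlimited-item value as the $n$-independent main term and bound the additive excess by the run-out probability, which is at most $s^m/m!$ for capacity $s\le1$ and $m$ items, times a gap of at most $1$. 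Summing these corrections over $k=1,\ldots,n$, together with the $K=1$ boundary term (whose weight is $1/n$), produces the $O(1/n)$ contribution. I expect the unlimited-item term to evaluate to $\tfrac{13}{60}$ and the corrections to be bounded by $\tfrac{1}{20n}$, giving $\tfrac{3+13n}{60n}$; the final inequality $\tfrac{3+13n}{60n}\le\tfrac{7}{30}$ then follows because this expression decreases in $n$ with equality at $n=3$.
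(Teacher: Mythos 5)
Your structural reduction is correct and genuinely different from the paper's route: the dichotomy (if $W_1 < W_K - g$ the rollout cannot improve and the minimum is $g$; otherwise the residual problem is a fresh greedy run with capacity $s = g + W_1 - W_K$ on i.i.d.\ $\mc{U}[0,1]$ items, justified by Lemma \ref{exhindep}) is sound, and your renewal machinery checks out exactly: $h(s) = 1+2s-e^s$ solves $h(s)=s-s^2+\int_0^s h(u)\,\dd u$, and $\PP(G'_s>x) = 1-xe^{s-x}$ for $x<s$ with an atom of mass $1-s$ at $s$ both follow from the first-item conditioning. The paper instead never touches the renewal structure: it defines a truncated upper bound $V_1^u$ on the drop gap that ignores all improvement from items beyond $K+1$, so that its law depends only on $(G,W_1,W_K,W_{K+1})$ \emph{uniformly in $n$}, proves the three case bounds $\E[V_*\mid \mc{C}_1]\le \frac{7}{30}$, $\E[V_*\mid \overline{\mc{C}_{1n}}]\le\frac{13}{60}$, $\E[V_*\mid \mc{C}_n]=\frac14$ (Lemmas \ref{sscc1}, \ref{sscbc1n}, \ref{ssccn}), and lets all $n$-dependence enter through the uniform law of $K$ (Lemma \ref{lemmakunif}): $\frac{1}{n}\cdot\frac{7}{30}+\frac{n-2}{n}\cdot\frac{13}{60}+\frac1n\cdot\frac14 = \frac{3+13n}{60n}$. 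Your approach pays for its exactness by pushing the $n$-dependence into per-case finite-item corrections, which is precisely where it breaks down as written.

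The gap is quantitative and twofold. First, your unlimited-item main term does not equal $\frac{13}{60}$. Carrying out your own integrals: for $K\ge 2$, $\E[\min(g,G'_s)]$ equals $1+2s-e^s$ for $s\le g$ and $g-e^s+(1+g)e^{s-g}$ for $s>g$, and averaging over $W_1\sim\mc{U}[0,1]$, $W_K\sim\mc{U}[g,1]$, and $f_G(g)=2-2g$ gives
\begin{equation}
\int_0^1 2\left[\, g(1-g) + (1+g-e^g)(e^{1-g}-1)\,\right]\dd g \;=\; \tfrac13 + 4e - 11 \;\approx\; 0.2065,
\end{equation}
strictly below $\frac{13}{60}\approx 0.2167$ (your renewal computation is exact, hence smaller than the paper's truncation bound). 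This leaves slack of only about $0.0102$ in the constant term. Second, the correction bookkeeping cannot be closed by ``run-out probability $s^m/m!$ times a gap of at most $1$.'' For $K=n$ ($m=0$) the run-out probability is $1$ and your bound contributes up to $\frac1n$, vastly exceeding the $\frac{1}{20n}=\frac{0.05}{n}$ budget; even computed \emph{exactly}, the $K=n$ correction is $\frac14-(\frac13+4e-11)\approx 0.0435$, which alone consumes nearly the entire $\frac{1}{20n}$ allowance. At $n=3$, where the theorem's bound $\frac{7}{30}$ must be met with essentially no room, you would need $c_1$ (the $m=1$ excess) and the $K=1$ correction together below roughly $0.014$, which requires evaluating these corrections exactly via the renewal law (the excess with $m$ items behaves like $e^s-1-s-\cdots$, i.e.\ order $s^{m+2}$), not bounding them crudely. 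So the plan is viable in principle and would even yield sharper asymptotics than the paper, but as proposed the final assembly does not produce the stated bound: the claimed values of both the main term and the correction budget are wrong, and the small-$n$ cases --- exactly where the theorem is tight --- are not actually proved.
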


\begin{figure}
\begin{center}
\includegraphics[width=0.48\textwidth]{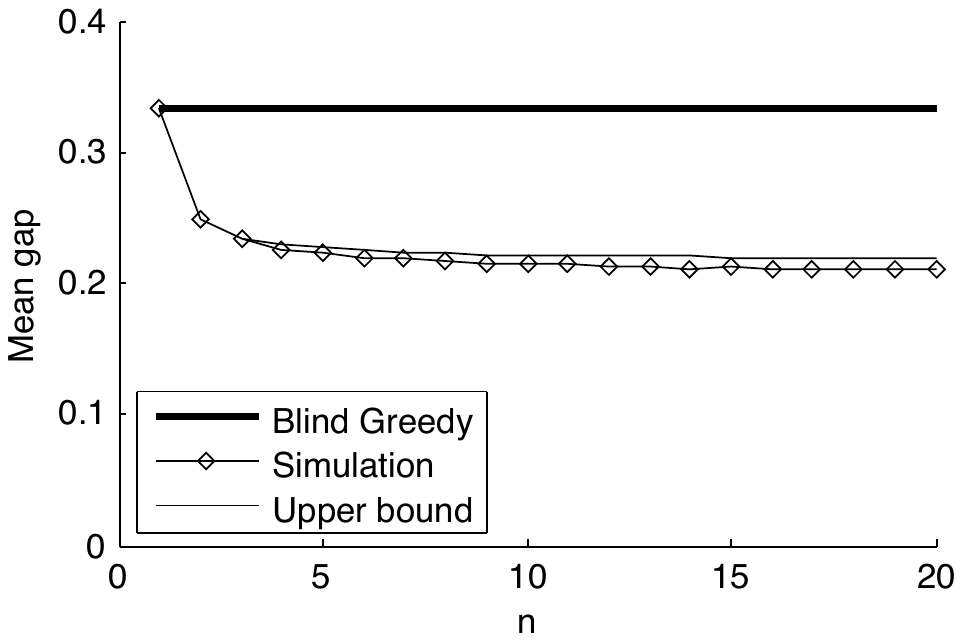}
\includegraphics[width=0.48\textwidth]{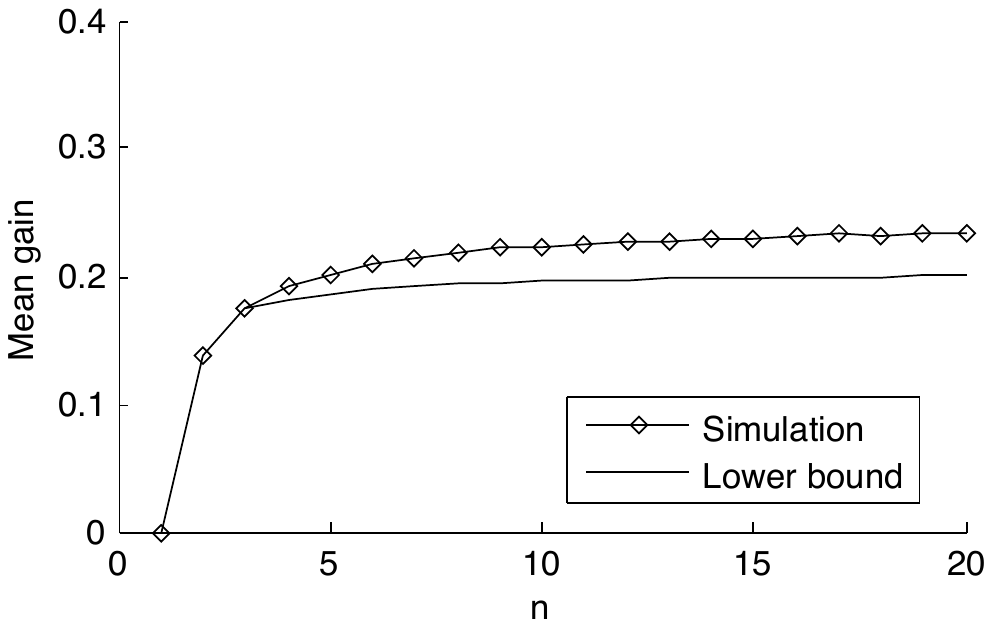}
\caption{Performance bounds and simulated values for the expected gap $\E[V_*(n)|\cdot]$ and expected gain $\E[Z_*(n)|\cdot]$ after running a single iteration of the \textsc{Consecutive-Rollout} algorithm on the subset sum problem and knapsack problem, respectively. For each $n$, the mean values are shown for $10^5$ simulations.\label{consecplots}}
\end{center}
\end{figure}
\nin As expected, there is not a strong dependence on $n$ for this algorithm. The bound is tight for $n=3$, where it evaluates to $\frac{7}{30} \approx 0.233$. It is also clear that $\lim_{n\rightarrow \infty} \E[V_*(n)|\cdot] \le \frac{13}{60} \approx 0.217$. The bounds are shown with simulated performance in Figure \ref{consecplots}(a). A similar result holds for the knapsack problem.
\begin{theorem}For the knapsack problem with $n\ge3$, the gain $Z_*(n)$ obtained by running a single iteration of \textsc{Consecutive-Rollout} satisfies
\be
\E \left [ Z_*(n) \left \vert \sum_{i=1}^nW_i>B \right. \right ] \ge \frac{-26+59n}{288n} \ge \frac{151}{864} \approx 0.175.
\ee
\label{kp_consec_thm}
\end{theorem}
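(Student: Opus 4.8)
The plan is to mirror the structure behind the subset sum bound in Theorem \ref{ss_consec_thm}, but to track \emph{profits} rather than the gap. A single iteration of \textsc{Consecutive-Rollout} keeps the better of two solutions: the \textsc{Blind-Greedy} solution on all items, of value $U_+ = \sum_{i=1}^{K-1} P_i$, and the solution obtained by skipping item $1$ and running \textsc{Blind-Greedy} on $\langle 2,\ldots,n\rangle$ with capacity $B$, of value $U_-$. Because later iterations cannot decrease the value (sequential consistency of \textsc{Blind-Greedy}), the definition \eqref{gaindef} gives the lower bound $Z_*(n) \ge (U_- - U_+)_+$, and the whole argument reduces to lower bounding $\E[(U_- - U_+)_+ \mid \sum_i W_i > B]$. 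I would condition on the critical item index: by Lemma \ref{lemmakunif} it is uniform on $\langle 1,\ldots,n\rangle$ given $\sum_i W_i > B$, so $\E[(U_- - U_+)_+ \mid \cdot] = \frac{1}{n}\sum_{\ell=1}^{n}\E[(U_- - U_+)_+ \mid K=\ell]$, and I would treat $K=1$ and $K\ge 2$ separately.

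For $K \ge 2$, items $2,\ldots,K-1$ are packed by \textsc{Blind-Greedy} both with and without item $1$, so skipping item $1$ frees exactly weight $W_1$ and leaves residual capacity $W_1 + G$ into which items $K, K+1, \ldots$ are packed greedily. Letting $m\ge 0$ be the (weight-determined) number of these newly packed items, $U_- - U_+ = \sum_{j=0}^{m-1} P_{K+j} - P_1$. The key simplification is that profits are independent of all weights and of $G$ and $K$, so conditioning on $m$ leaves the relevant profits as $m+1$ independent $\mathcal{U}[0,1]$ variables and yields $\E[(U_- - U_+)_+ \mid m] = h(m)$, where $h(m) := \E[(Q_1 + \cdots + Q_m - Q_0)_+]$ for i.i.d.\ uniforms $Q_0,\ldots,Q_m$; thus $h(0)=0$, $h(1)=\tfrac16$, and each $h(m)$ is an explicit rational. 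The remaining probabilistic content is the law of $m$, which I would obtain from Lemma \ref{exhindep}: given $G=g$ and $K$, we have $W_K \sim \mathcal{U}[g,1]$ while $W_{K+1},\ldots$ are i.i.d.\ $\mathcal{U}[0,1]$, so packing into capacity $W_1 + G$ with $W_1 \sim \mathcal{U}[0,1]$ is a clean greedy process on uniform items that I can analyze by integrating over $w_1$, $w_K$, and the spacings of the later weights, and finally over the gap density $f_G(g) = 2-2g$ from Theorem \ref{greedythm}.

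For $K=1$, item $1$ is critical, so $U_+ = 0$, the gap equals $B$, and skipping item $1$ runs \textsc{Blind-Greedy} on $\langle 2,\ldots,n\rangle$ with capacity $G = g \in [0,1]$; the gain is simply the sum of the packed profits and is automatically nonnegative, so there is no positive-part subtlety and no $-P_1$ term, giving $\E[(U_- - U_+)_+ \mid K=1] = \tfrac12\,\E[m' \mid K=1]$ with $m'$ the number of items packed. This is the same greedy-on-uniform-items computation as before, with capacity $g$ in place of $w_1 + g$. Collecting the two cases under the $1/n$ weighting should produce the stated expression $\frac{-26+59n}{288n}$, which is affine in $1/n$ with negative slope; hence it is increasing in $n$, its infimum over $n\ge 3$ is attained at $n=3$, and it equals $\frac{151}{864}$ there.

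I expect the main obstacle to be controlling the \emph{cascade}: more than one item can be packed into the freed capacity $W_1 + G$, so $m$ is genuinely random and its distribution requires a joint analysis of several uniform spacings against a random capacity (the point at which the knapsack argument is strictly harder than merely tracking the gap). A secondary difficulty is the positive-part operator coupling the gained profits with $P_1$; the reduction to $h(m)$ is what defuses it, but one must justify that $m$ depends only on weights so that this conditioning is legitimate, and then assemble $\sum_m \PP(m)\,h(m)$ into closed form. If the full cascade proves intractable, a valid fallback retains only the first newly packed item via the monotonicity bound $(U_- - U_+)_+ \ge (P_K\,\I[W_K \le W_1 + G] - P_1)_+$, at the cost of a weaker constant.
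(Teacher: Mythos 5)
Your skeleton is the right one, and it matches the structure of the paper's (supplementary) proof: reduce the first-iteration gain to $(U_- - U_+)_+$, condition on the critical index using Lemma \ref{lemmakunif}, exploit the fact that the packing pattern is weight-measurable while profits are independent $\mc{U}[0,1]$, and integrate against the laws of $G$, $W_1$, $W_K$, $W_{K+1}$ supplied by Theorem \ref{greedythm} and Lemmas \ref{othersindep} and \ref{exhindep}. Your identities $U_- - U_+ = \sum_{j=0}^{m-1}P_{K+j} - P_1$ for $K\ge 2$ (residual capacity $W_1+G$ after items $2,\ldots,K-1$ repack), $\E[(U_--U_+)_+\mid m]=h(m)$ with $h(1)=\tfrac16$, and the $K=1$ reduction to $\tfrac12\E[m']$ are all correct.

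The genuine gap is that you never pin down the law of the cascade count $m$, and neither of the two concrete routes you offer can deliver the stated bound. The exact full-cascade computation is an $n$-dependent renewal analysis (the count of newly packed items depends on $n-K$ and on a random capacity $W_1+G$ that may exceed $1$), and it would not produce the stated expression in any case: the affine-in-$1/n$ form $\frac{59}{288}-\frac{26}{288n}$ reveals a three-case decomposition $K=1$, $2\le K\le n-1$, $K=n$ with a \emph{constant} middle-case contribution of $\frac{59}{288}$ (and the two boundary cases contributing $\frac{23}{72}$ together) --- that is, a deliberate truncation of the cascade at items $K$ and $K+1$, the profit analogue of the surrogate $V_1^u$ in the proof of Lemma \ref{sscbc1n}; note also that $K=n$ must be split off from your lumped $K\ge2$ case since no item $K+1$ exists there. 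Your acknowledged fallback (retain only item $K$) is valid but provably too weak to prove the theorem: for $2\le K\le n-1$, $\PP(W_K\le W_1+G\mid G=g)=\frac{1+g}{2}$ by Lemma \ref{exhindep}, so $\int_0^1 \frac{1+g}{2}(2-2g)\,\dd g = \frac{2}{3}$, and since this event is independent of the profits, the one-item bound gives only $\frac{2}{3}\cdot\frac{1}{6}=\frac{1}{9}=\frac{32}{288}$ in the middle case, far below the required $\frac{59}{288}$. The missing idea is thus the intermediate, two-item truncated surrogate --- a deterministic function of $(G,W_1,W_K,W_{K+1})$ whose contributions are nonnegative, so that dropping items beyond $K+1$ preserves the lower bound --- which is exactly what makes the resulting four-fold integral tractable and yields the constants in $\frac{-26+59n}{288n}$; your final monotonicity step (the expression is increasing in $n$, hence minimized at $n=3$ with value $\frac{151}{864}$) is fine once that is in place.
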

\nin The bound is plotted with simulated values in Figure \ref{consecplots}(b). Again the bound is tight for $n=3$ with a gain of $\frac{151}{864}\approx 0.175$. Asymptotically, $\lim_{n\rightarrow \infty} \E[Z_*(n)|\cdot] \ge \frac{59}{288}\approx 0.205$. The rest of this section is devoted to the proof of Theorem \ref{ss_consec_thm}. The proof of Theorem \ref{kp_consec_thm} follows a similar structure and is given in the supplementary material.

\subsection{Consecutive rollout: subset sum problem analysis}

The proof idea for Theorem \ref{ss_consec_thm} is to visually analyze the solution sequence given by \textsc{Blind-Greedy} on the nonnegative real line, as shown in Figure \ref{vis}, and then look at modifications to this solution caused by removing the first item. Removing the first item causes the other items to slide to the left and may make some remaining items feasible to pack. We determine bounds on the gap produced by this procedure while conditioning on the greedy gap $G$, critical item $K$, and the item weights $(W_K,W_{K+1})$. We then take the minimum of this gap and the greedy gap and integrate over conditioned variables to obtain the final bound. Our analysis is divided into lemmas based on the critical item $K$. We show a detailed proof of the lemma corresponding to $2 \le K \le n-1$. For the cases where $K=1$ or $K=n$, the proofs are similar and are placed in the supplementary material.

To formalize the behavior of \textsc{Consecutive-Rollout}, we introduce the following two definitions. The \textit{drop critical item} $L_1$ is the index of the item that becomes critical when the first item is removed and thus satisfies
\bea
\left \{
\begin{array}{ll}
\ds \sum_{i = 2}^{L_1-1} W_i \le B < \sum_{i = 2}^{L_1} W_i & \qquad \sum_{i=2}^n W_i > B \n \\
L_1 = n+1 & \qquad \sum_{i=2}^n W_i \le B,
\end{array}
\right.
\eea
where the latter case signifies that all remaining items can be packed. The \textit{drop gap} $V_1$ then has definition
\bea
V_1 := B - \sum_{i = 2}^{L_1-1} W_i.
\eea
We are ultimately interested in the minimum of the drop gap and the greedy gap, which we refer to as the \textit{minimum gap}, and is the value obtained by the first iteration of the rollout algorithm:
\bea
V_*(n) := \min(G,V_1).
\eea
We will often write write $V_*(n)$ simply as $V_*$. We will also use $\mc{C}_i$ to denote the event that item $i$ is critical and $\overline{\mc{C}_{1n}}$ for the event that $2 \le K \le n-1$. Also recall that we have $\bs{P_I} = \bs{W_I}$ for the subset sum problem.

\begin{lemma}
\label{sscbc1n}
For  $2\le K \le n-1$, the expected minimum gap satisfies
\be
\E[V_*(n)|2 \le K \le n-1] \le \frac{13}{60}.
\ee
\end{lemma}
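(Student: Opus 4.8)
The plan is to condition on the greedy gap $G=g$ and the critical index $K$, use Lemma \ref{exhindep} to obtain clean independent distributions for the few weights that actually matter, write the drop gap $V_1$ as an explicit piecewise function of those weights, and then integrate a simple upper bound on $\min(g,V_1)$ over the weights and over $g$. First I would fix $K=k$ with $2\le k\le n-1$, so that both indices $1$ and $k+1$ lie in $\bs{I}\setminus\lan k\ran$ (here $k\ge 2$ gives $1\neq k$, and $k\le n-1$ guarantees that item $k+1$ exists). By Lemma \ref{exhindep}, conditioned on $K=k$ and $G=g$ the weights $W_1$ and $W_{k+1}$ are independent $\mc{U}[0,1]$ and $W_k$ is independently $\mc{U}[g,1]$. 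By Theorem \ref{greedythm} the conditional density of $G$ is $f_G(g)=2-2g$ on $[0,1]$ regardless of $k$, and the way $V_*$ depends on the weights is identical for every $k$ in the range, so for any fixed such $k$
\be
\E[V_*(n)\mid \overline{\mc{C}_{1n}}]=\int_0^1 \E[V_*(n)\mid G=g,K=k]\,(2-2g)\,\dd g .
\ee
A convenient simplification is that $f_G(g)$ times the density $\tfrac{1}{1-g}$ of $W_k$ equals the constant $2$, which collapses the outer two densities in the final integral.

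Next I would translate the ``slide left'' picture of Figure \ref{vis} into formulas. Removing item $1$ leaves items $2,\dots,k-1$ packed and frees exactly $W_1$ of capacity, so the residual capacity available to item $k$ is $g+W_1$. This yields $V_1$ in three cases: (i) if $W_k>g+W_1$ then item $k$ is still infeasible, so $L_1=k$ and $V_1=g+W_1\ge g$; (ii) if $W_k\le g+W_1$ but $W_{k+1}>g+W_1-W_k$ then item $k$ is packed and item $k+1$ is not, so $V_1=g+W_1-W_k$; (iii) if both fit then packing further items only shrinks the gap, so $V_1\le g+W_1-W_k-W_{k+1}$, a nonnegative quantity since $W_{k+1}\le g+W_1-W_k$. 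Taking $V_*=\min(G,V_1)$ and using monotonicity of $\min(g,\cdot)$, case (i) contributes exactly $g$, case (ii) contributes $\min(g,\,g+W_1-W_k)$, and case (iii) contributes at most $\min(g,\,g+W_1-W_k-W_{k+1})$. I would package these three contributions into a single upper bound $\widetilde v(g,w_1,w_k,w_{k+1})$ on $V_*$ that depends only on these three weights; because I only ever use the upper bound $V_1\le g+W_1-W_k-W_{k+1}$ in case (iii), weights $W_{k+2},\dots$ never enter.

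Finally I would assemble everything, using the constant-$2$ observation above, as
\be
\E[V_*(n)\mid \overline{\mc{C}_{1n}}]\le \int_0^1\!\int_0^1\!\int_0^1\!\int_g^1 2\,\widetilde v(g,w_1,w_k,w_{k+1})\,\dd w_k\,\dd w_{k+1}\,\dd w_1\,\dd g,
\ee
and evaluate this to obtain $\tfrac{13}{60}$. The main obstacle is precisely this last, purely computational step: the constraints $\{W_k\le g+W_1\}$ and $\{W_{k+1}\le g+W_1-W_k\}$ (and their complements), together with the two minima that switch at $W_1=W_k$ and at $W_1=W_k+W_{k+1}$, partition the weight region into several polytopes, and the limits of integration must be tracked carefully on each piece before integrating out $g$. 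No new ideas beyond the structural lemmas already proved are required; the only genuine care is the region bookkeeping so that the resulting elementary integrals combine to the claimed constant $\tfrac{13}{60}$.
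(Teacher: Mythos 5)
Your proposal is correct and takes essentially the same route as the paper: your case-(iii) relaxation $V_1 \le g + W_1 - W_k - W_{k+1}$ is exactly the paper's truncated variable $V_1^u$ (which likewise ignores any slack recovered from items beyond $k+1$), and your use of Lemma \ref{exhindep} to isolate $(G, W_1, W_k, W_{k+1})$ matches the paper's marginalization, the only difference being cosmetic---you integrate $\min(g,\widetilde v)$ directly over the weights using the collapsed joint density $f_G(g)f_{W_k\mid G}(w_k\mid g)=2$, whereas the paper integrates the tail probability $\PP(V_1^u>v,\,G>v\mid \overline{\mc{C}_{1n}})$ over $v$. The deferred bookkeeping does check out: carrying out your nested integration gives the inner value $g-\tfrac{7}{4}g^2+\tfrac{5}{6}g^3-\tfrac{1}{12}g^4$ uniformly in $g\in[0,1]$, and $2\int_0^1\bigl(g-\tfrac{7}{4}g^2+\tfrac{5}{6}g^3-\tfrac{1}{12}g^4\bigr)\dd g=\tfrac{13}{60}$, matching the paper's bound.
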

\begin{proof}
Fix $K=k$ for $2\le k \le n-1$. The drop gap in general may be a function of the weights of all remaining items. To make things more tractable, we define the random variable $V_1^u$ that satisfies $V_1 \le V_1^u$ with probability one, and as we will show, is a deterministic function of only $(G, W_1, W_k, W_{k+1})$. The variable $V_1^u$ is specifically defined as

\bea
V_1^u := \left \{
\begin{array}{ll}
V_1 & L_1 = k \vee L_1 = k+1 \\
B - \sum_{i = 2}^{k+1}W_i & L_1 \ge k+2.
\end{array}
\right.
\eea
In effect, $V_1^u$ does not account for the additional reduction in the gap given if any of the items $i \ge k+2$ become feasible, so it is clear that $V_1^u \le V_1$.

To determine the distribution of $V_1^u$, we start by considering scenarios where $L_1 \ge k+2$ is not possible and thus $V_1^u = V_1$. For $G = g$ and $\bs{W_I} = \bs{w_I}$, an illustration of the drop gap as determined by $(g,w_1,w_k,w_{k+1})$ is shown in Figure \ref{C2}. We will follow the convention of using lowercase letters for random variables shown in figures and when referring these variables. The knapsack is shown at the top of the figure with items packed from left to right, and at the bottom the drop gap $v_1$ is shown as a function of $w_1$. The shape of the function is justified by considering different sizes of $w_1$. As long as $w_1$ is smaller than $w_k-g$, the gap given by removing the first item increases at unit rate. As soon as $w_1= w_k-g$, item $k$ becomes feasible and the gap jumps to zero. The gap then increases at unit rate and another jump occurs when $w_1$ reaches $w_k-g+w_{k-1}$. The case shown in the figure satisfies $w_k - g + w_{k+1}+w_{k+2}>1$. It can be seen that this is a sufficient condition for the event $L_1 \ge k+2$ to be impossible since even if $w_1 = 1$, item $k+2$ cannot become feasible. It is for this reason that $v_1$ is uniquely determined by $(g, w_1, w_k,w_{k+1})$ here.

\begin{figure}[h]
\begin{center}
\includegraphics[scale=0.7] {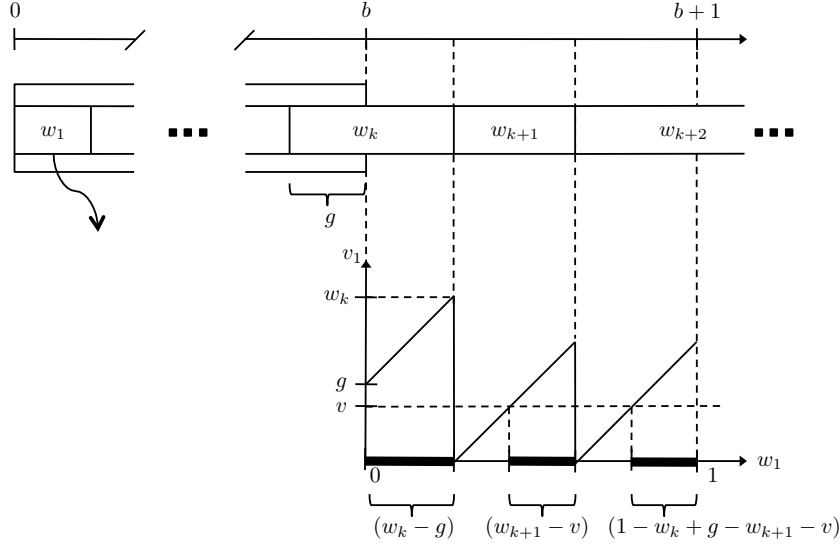}
\caption{Gap $v_1$ as a function of $w_1$, parameterized by $(g,w_k, w_{k+1})$, resulting from the removal of the first item and assuming that $K=k$ with $2 \le k \le n-1$. The function starts at $g$ and increases at unit rate, except at $w_1=w_k-g$ and $w_1=w_k-g+w_{k+1}$, where the function drops to zero. If we only condition on $(g,w_k,w_{k+1})$, the probability of the event $V_1>v$ is given by the total length of the bold regions for $v < g$. Note that in the figure, $w_k-g+w_{k+1} < 1$ and the second two bold segments have positive length; these properties do not hold in general.  \label{C2}}
\end{center}
\end{figure}

Continuing with the case shown in the figure, if we only condition on $(g,w_k, w_{k+1})$, we have by Lemma \ref{exhindep} that $W_1$ follows distribution $\mc{U}[0,1]$, meaning that the event $V_1 > v$ is given by the length of the bold regions on the $w_1$ axis. We explicitly describe the size of these regions. Assuming that $L_1 \le k+1$, we derive the following expression:
\bea
\PP(V_1>v |g,w_k, w_{k+1},\overline{\mc{C}_{1n}}, L_1 \le k+1) &=& (w_k-g) + (w_{k+1}-v)_+ +(1-w_k+g-w_{k+1}-v)_+ \n \\
&& - (w_k-g+w_{k+1}-1)_+, \quad  v < g.
\label{babymarty}
\eea
The first three terms in the expression come from the three bold regions shown in Figure \ref{C2}. We have specified that $v<g$, so the length of the first segment is always $w_k-g$. For the second term, it is possible that $v>w_{k+1}$, so we only take the positive portion of $w_{k+1}-v$. In the third term, we take the positive portion to account for the cases where (1) item $k+1$ does not become feasible, meaning $w_k-g+w_{k+1}>1$, and (2) if it is feasible, where $v$ is greater than the height of the third peak, where $v > 1-w_k+g-w_{k+1}$. 
 
The last term is required for the case where item $k+1$ does not become feasible, as we must subtract the length of the bold region that potentially extends beyond $w_1 = 1$. Note that we always subtract one in this expression since it is not possible for the $w_1$ value where $v_1  = v$ on the second peak to be greater than one. To see this, assume the contrary, so that $v+w_k-g>1$. This inequality is obtained since on the second peak we have $v_1 = g-w_k+w_1$ and the $w_1$ value that satisfies $v_1 = v$ is equal to $v+w_k-g$. The statement $v+w_k-g > 1$, however, violates our previously stated assumption that $g < v$.

We now argue that we in fact have $V_1 \le V_1^u$ with probability one, where
\bea
\PP(V_1^u >v |g,w_k, w_{k+1},\overline{\mc{C}_{1n}}) &=& (w_k-g) + (w_{k+1}-v)_+ +(1-w_k+g-w_{k+1}-v)_+ \n \\
&& - (w_k-g+w_{k+1}-1)_+, \quad v < g.
\label{marty}
\eea
We have simply replaced $V_1$ with $V_1^u$ in \eqref{babymarty} and removed the condition $ L_1 \le k+1$. We already know that the expression is true for $L_1 \le k+1$. For $L_1 \ge k+2$, we refer to Figure \ref{C2} and visualize the effect of a much smaller $w_{k+2}$, so that $w_k - g + w_{k+1}+w_{k+2}<1$. This would yield four (or more) peaks in the $v_1$ function. To determine the probability of the event $V_1>v$ while $W_1$ is random, we would have to evaluate the sizes of these extra peaks, which would be a function of $w_{k+2}$, $w_{k+3}$, etc. However, our definition of $V_1^u$ does not account for the additional reductions in the gap given by items beyond $k+1$. We have already shown that $V_1 \le V_1^u$, and it is now clear that $V_1^u$ is a deterministic function of $(G, W_1, W_k, W_{k+1})$, and that \eqref{marty} is justified.


We now evaluate the minimum of $V_1^u$ and $G$ and integrate over the conditioned variables. To begin, note that conditioning on the gap $G$ makes $V_1^u$ and $G$ independent, so,
\bea
\PP(V_1^u>v, G > v |\overline{\mc{C}_{1n}},g,w_k, w_{k+1}) &=& \PP(V_1^u>v |\overline{\mc{C}_{1n}},g,w_k, w_{k+1})\I(v<g).
\eea
Marginalizing over $W_{k+1}$, which has uniform density according to Lemma \ref{exhindep}, gives
\bea
\PP(V_1^u > v, G>v|\overline{\mc{C}_{1n}},g,w_k) &=& \int_0^1\PP(V_1^u > v, g>v|\overline{\mc{C}_{1n}},g,w_k,w_{k+1})f_{W_{k+1}}(w_{k+1})\mathrm{d} w_{k+1} \n \\
&=&\left ( (w_k-g) + \frac{1}{2}(1-v)^2 - \frac{1}{2}(w_k-g)^2  \right. \n \\ 
&&\left . +\frac{1}{2}(1-w_k+g-v)_+^2 \right ) \I(v<g).
\eea
Using Lemma \ref{greedygapcond}, we have
\bea
\PP(V_1^u>v, G>v|\overline{\mc{C}_{1n}},w_k) &=& \int_0^{w_k} \PP(V_1^u > v, G>v|\overline{\mc{C}_{1n}},g,w_k) f_{G|\overline{\mc{C}_{1n}},W_k}(g|\overline{\mc{C}_{1n}},w_k) \dd g \n \\
&=& 1-2v-\frac{v}{w_k}+\frac{2v^2}{w_k}-\frac{v^3}{2 w_k}+\frac{v w_k}{2}.
\eea
Finally, we integrate over $W_k$ according to Lemma \ref{othersindep}
\bea
\PP(V_1^u > v, G>v|\overline{\mc{C}_{1n}}) &\le&  \int_v^1\PP(V_1^u>v,G>v|\overline{\mc{C}_{1n}},w_k)f_{W_k}(w_k)\dd w_k \n \\
&=& 1-\frac{11v}{3}+5v^2-3v^3+\frac{2v^4}{3}.
\eea
This term is sufficient for calculating the expected value bound.
\end{proof}

\begin{lemma}
\label{ssccn}
For $K = n$, the expected minimum gap satisfies
\be
\E[V^*(n)|K=n] = \frac{1}{4}.
\ee
\end{lemma}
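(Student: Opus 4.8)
The plan is to exploit the special structure of the case $K=n$: all of items $1,\dots,n-1$ are packed by \textsc{Blind-Greedy}, and only item $n$ is infeasible. When the first item is removed, items $2,\dots,n-1$ are certainly still feasible (they fit even with item $1$ present), so they are all repacked, and the \emph{only} question is whether item $n$ now fits. The remaining capacity seen by item $n$ after removing item $1$ is $B-\sum_{i=2}^{n-1}W_i = G+W_1$. Hence the drop gap is a deterministic function of just $(G,W_1,W_n)$: if $W_n\le G+W_1$ then item $n$ becomes feasible and $V_1 = G+W_1-W_n$, whereas if $W_n > G+W_1$ then $L_1=n$ and $V_1 = G+W_1$. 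In either case $V_1$ is given exactly, with no need for an upper-bounding surrogate $V_1^u$ as in Lemma \ref{sscbc1n}; this is what makes the present case an equality rather than a bound.

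Next I would pass to the reduction $G-V_*=(G-V_1)_+$, since $V_*=\min(G,V_1)$. A short case check gives $G-V_1 = W_n-W_1$ when $W_n\le G+W_1$ and $G-V_1=-W_1\le 0$ otherwise, so
\[
G-V_* = (W_n-W_1)\,\I(W_1<W_n\le G+W_1).
\]
Using $\E[G\mid K=n]=\tfrac13$ from Theorem \ref{greedythm}, it then suffices to show that the expected reduction equals $\tfrac1{12}$, since $\tfrac13-\tfrac1{12}=\tfrac14$.

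For the expected reduction I would condition on $G=g$ and invoke Lemma \ref{exhindep}: given $K=n$ and $G=g$, the weights $W_1$ and $W_n$ are independent with $W_1\sim\mc{U}[0,1]$ and $W_n\sim\mc{U}[g,1]$, while $G$ itself has density $2-2g$ on $[0,1]$ by Theorem \ref{greedythm}. Writing the reduction as a triple integral over $(g,w_1,w_n)$, the densities combine as $(2-2g)\cdot\frac{1}{1-g}=2$, which removes the only awkward $g$-dependence. I would then swap the order of integration and integrate over $g$ first: for fixed $w_1<w_n$ the constraints $g\le w_n$ and $w_n\le g+w_1$ force $g\in[\,w_n-w_1,\;w_n\,]$, an interval of length exactly $w_1$. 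This collapses the problem to
\[
2\int_{0}^{1}\!\!\int_{w_1}^{1}(w_n-w_1)\,w_1\,\dd w_n\,\dd w_1 = \int_0^1 w_1(1-w_1)^2\,\dd w_1 = \tfrac1{12},
\]
giving $\E[V_*(n)\mid K=n]=\tfrac14$.

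The only real subtlety is the first step: arguing carefully that for $K=n$ the removal of item $1$ repacks items $2,\dots,n-1$ with certainty and leaves item $n$ as the sole candidate, so that $V_1$ depends on nothing beyond $(G,W_1,W_n)$. Once this characterization is in hand the computation is short; in particular the cancellation $(2-2g)/(1-g)=2$ together with the order swap avoids the boundary casework (for instance, whether $g+W_1$ exceeds $1$) that complicates the proof of Lemma \ref{sscbc1n}.
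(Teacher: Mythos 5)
Your proof is correct, and the value checks out independently: redoing the computation in the paper's own template (condition on $W_n=w_n$, use $G\mid W_n\sim\mc{U}[0,w_n]$ from Lemma \ref{greedygapcond} and $f_{W_n}(w_n)=2w_n$ from Lemma \ref{othersindep}) gives $\E[V_*\mid g,w_n]=\int_0^g\bigl((w_n-g)+(1-w_n+g-v)\bigr)\dd v=g-g^2/2$ (the positive part is never active since $1-w_n+g-v\ge 1-w_n\ge 0$ for $v\le g$), and then $\int_0^1(w_n/2-w_n^2/6)\,2w_n\,\dd w_n=\tfrac14$, matching your answer. Your structural first step is exactly the insight the case $K=n$ rests on: items $2,\ldots,n-1$ surely repack after item $1$ is removed, only item $n$ can newly fit, so $V_1$ is an \emph{exact} deterministic function of $(G,W_1,W_n)$ with no surrogate $V_1^u$ needed, which is why the lemma is an equality rather than a bound. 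Where you genuinely depart from the paper's method (whose visible model is Lemma \ref{sscbc1n}: compute the joint tail $\PP(V_1>v,\,G>v\mid g,w_k,\ldots)$ as lengths of segments on the $w_1$ axis, marginalize over the conditioned weights and $g$, then integrate the tail over $v$) is in the computation: you never form a tail probability, instead writing $G-V_*=(G-V_1)_+=(W_n-W_1)\,\I(W_1<W_n\le G+W_1)$ and subtracting the expected reduction from $\E[G\mid K=n]=\tfrac13$ (Theorem \ref{greedythm}). This buys two simplifications: the density cancellation $f_G(g)\,f_{W_n\mid G}(w_n\mid g)=(2-2g)\cdot\frac{1}{1-g}=2$, i.e., $(G,W_n)$ is uniform on the triangle $\{0\le g\le w_n\le 1\}$, and the Fubini step integrating over $g$ first, which collapses the indicator to an interval of length exactly $w_1$ and eliminates all boundary casework (such as whether $g+w_1>1$). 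The tail-probability route generalizes to the other cases of $K$ where only an upper bound on $\min(G,V_1)$ is available, whereas your reduction identity exploits the exactness special to $K=n$; for this lemma it is the slicker argument, and it is sound.
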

\nin \textit{Proof.} Supplementary material.

\begin{lemma}
\label{sscc1}
For $K=1$, the expected minimum gap satisfies
\be
\E[V^*(n)|K=1] \le \frac{7}{30}.
\ee
\end{lemma}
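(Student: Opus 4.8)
The plan is to exploit the degeneracy of the case $K=1$. When item $1$ is critical, \textsc{Blind-Greedy} packs nothing, so its gap is the entire capacity: $G = B = g$. Removing the first item therefore reduces the rollout to simply running \textsc{Blind-Greedy} on the untouched sequence $\lan 2, \ldots, n \ran$ with capacity $g$, and the drop gap $V_1$ is exactly the greedy gap of this fresh subset sum instance. Since any greedy gap is bounded above by the capacity it is filling, we automatically have $V_1 \le g = G$, so the minimization collapses: $V_* = \min(G, V_1) = V_1$. It thus suffices to bound $\E[V_1 \mid K=1]$, and the delicate ``take the minimum'' step of Lemma \ref{sscbc1n} is avoided entirely.

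For the distributions, conditioning on $K=1$ gives $G$ the density $2 - 2g$ on $[0,1]$ (Lemma \ref{greedygapcond} together with the weight law of Lemma \ref{othersindep}), and by Lemma \ref{exhindep}, conditioning further on $G=g$ leaves $W_2, \ldots, W_n$ i.i.d.\ $\mc{U}[0,1]$. To bound $V_1$ I would mimic the truncation device of Lemma \ref{sscbc1n} and introduce $V_1^u$, the gap that would result if the instance contained only items $2$ and $3$: explicitly $V_1^u = g$ when $W_2 > g$, $V_1^u = g - W_2$ when $W_2 \le g < W_2 + W_3$, and $V_1^u = g - W_2 - W_3$ when $W_2 + W_3 \le g$. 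Because \textsc{Blind-Greedy} halts at the \emph{first} infeasible item, $V_1$ and $V_1^u$ coincide whenever item $2$ or item $3$ is the drop critical item; they differ only when both are packed, in which case items $4, \ldots, n$ can only shrink the gap further. Hence $V_1 \le V_1^u$ with probability one, and $V_1^u$ is a deterministic function of $(G, W_2, W_3)$.

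Finally I would compute $\E[V_1^u \mid G=g]$ by a short case split on whether each of items $2,3$ fits, obtaining $\E[V_1^u \mid G=g] = g - \tfrac12 g^2 - \tfrac16 g^3$, and integrate against the gap density:
\be
\E[V_1^u \mid K=1] = \int_0^1 \left( g - \tfrac12 g^2 - \tfrac16 g^3 \right)(2 - 2g)\, \dd g = \frac{7}{30}.
\ee
Since $V_* = V_1 \le V_1^u$, this yields the claim. The only points requiring care are the two structural observations — that $V_* = V_1$ because the drop gap cannot exceed $g$, and that truncating at item $3$ is a valid upper bound thanks to the early stopping of \textsc{Blind-Greedy}; everything else is routine integration. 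I note that the bound is exact for $n=3$, where items $2$ and $3$ are all that remain and $V_1 = V_1^u$, matching the stated tightness of $\frac{7}{30}$.
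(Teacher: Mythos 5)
Your proposal is correct and takes essentially the same route as the paper: you exploit the degeneracy $G=B$ under $K=1$ so that $V_*=\min(G,V_1)=V_1$, introduce the same style of truncated upper bound $V_1^u$ (depending only on $G,W_2,W_3$) that the paper uses in Lemma \ref{sscbc1n}, invoke Lemmas \ref{greedygapcond}, \ref{othersindep}, and \ref{exhindep} for the conditional distributions, and integrate against the gap density $2-2g$; your arithmetic ($\E[V_1^u\mid G=g]=g-\tfrac12 g^2-\tfrac16 g^3$, yielding $\tfrac{7}{30}$) checks out, including the tightness at $n=3$.
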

\nin \textit{Proof.} Supplementary material.

The final result for the subset sum problem follows easily from the stated lemmas. ~\\

\nin\textit{Proof of Theorem \ref{ss_consec_thm}}
~Using the above Lemmas and noting that the events $\mc{C}_1$, $\overline{\mc{C}_{1n}}$, and $C_n$ form a partition of the event $\sum_{i\in I} W_i > B$, the result follows using the total expectation theorem and Lemma \ref{lemmakunif}. \qed


\section{Exhaustive rollout}
\label{exhaustive}
The \textsc{Exhaustive-Rollout} algorithm is shown in Algorithm \ref{exhRollout}. It takes as input a sequence of item weights $\bs{w_I}$ and capacity $b$. At each iteration, indexed by $t$, the algorithm considers all items in the available sequence $\bs{\bar{I}}$. It calculates the value obtained by moving each item to the front of the sequence and applying the \textsc{Blind-Greedy} algorithm. The algorithm then adds the item with the highest estimated value (if it exists) to the solution. We implicitly assume a consistent tie-breaking method, such as giving preference to the item with the lowest index. The next iteration then proceeds with the remaining sequence of items.

\algsetup{indent=2em}
\begin{algorithm}
\caption{\textsc{Exhaustive-Rollout}}
\label{exhRollout}
\begin{algorithmic}[1]
\REQUIRE Item weight sequence $\bs{w_I}$ where $\bs{I} = \langle 1, \ldots,n \rangle$, capacity $b$.
\ENSURE Feasible solution sequence $\bs{S}$, value $U$.
\STATE Initialize $\bs{S} \leftarrow \langle \rangle$, $\bs{\bar{I}} \leftarrow \bs{I}$, $\bar{b} \leftarrow b$, $U \leftarrow 0$.
\FOR{$t=1$ to $n$}
\FOR{$i\in \bs{\bar{I}}$ (each item in remaning item sequence)}
\STATE Let $\bs{\bar{I}^i}$ denote the sequence $\bs{\bar{I}}$ with $i$ moved to the first position.
\STATE Estimate value of sequence, $(\cdot,U_i)$ = \textsc{Blind-Greedy}($\bs{w_{\bar{I}^i}}$,$~\bar{b}$).
\ENDFOR
\IF{$\max_i U_i > 0$}
\STATE Determine item with max estimated value, $i^* \leftarrow \argmax_i U_i$.
\STATE Add item $i^*$ to solution sequence, $\bs{S} \leftarrow \bs{S} : \lan i^* \ran$, $\bs{\bar{I}} \leftarrow \bs{\bar{I}} \setminus \lan i^* \ran$.
\STATE Update remaining capacity, $\bar{b} \leftarrow \bar{b} - w_i$, and value, $U \leftarrow U+p_i$.
\ENDIF
\ENDFOR
\STATE Return $S$, $U$.
\end{algorithmic}
\end{algorithm}


\begin{figure}
\begin{center}
\begin{tabular}{cc}
\includegraphics[width=0.47\textwidth]{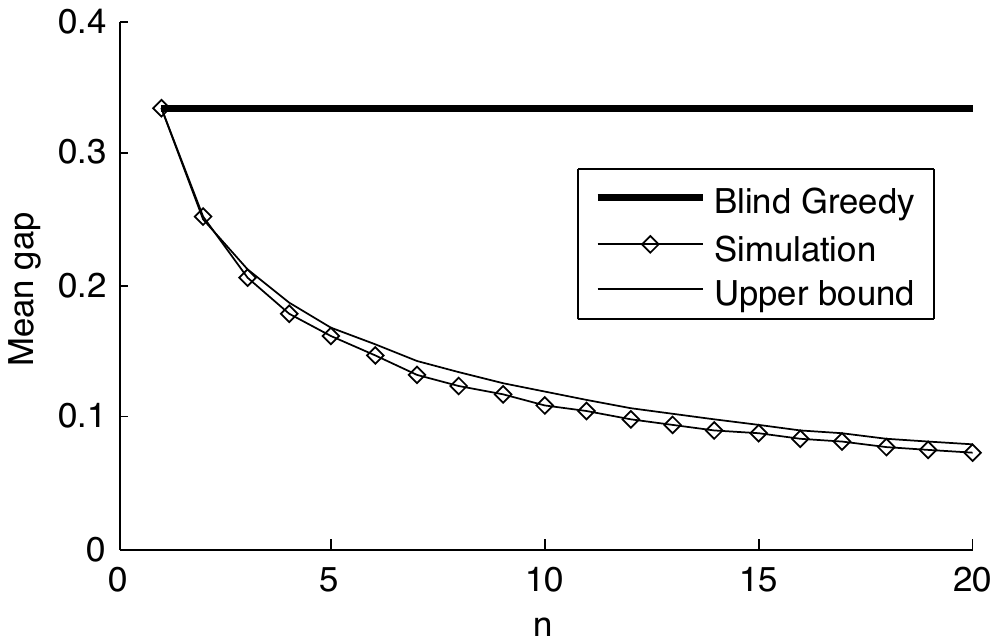} &
\includegraphics[width=0.47\textwidth]{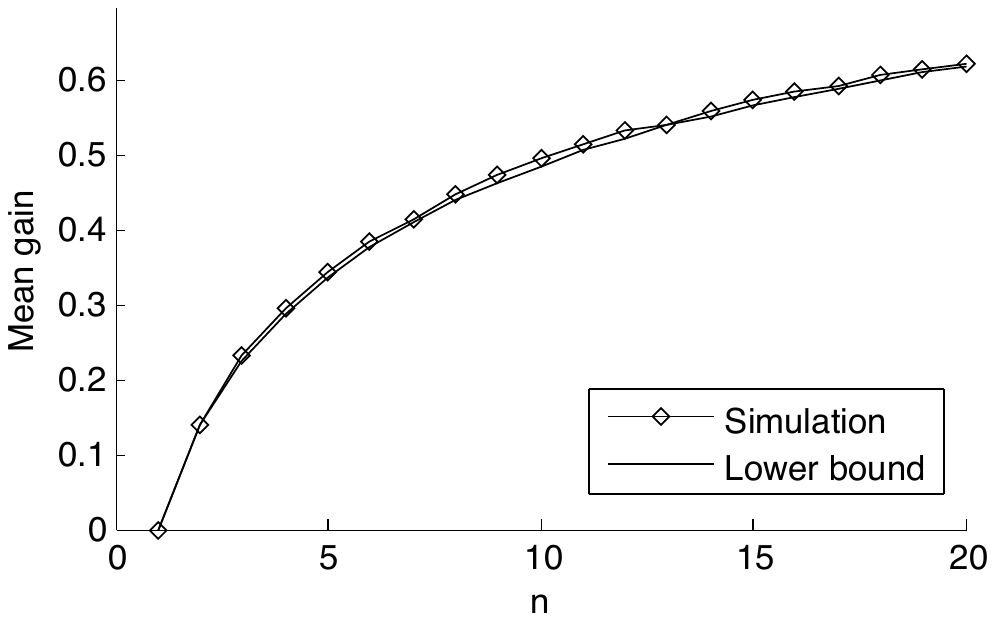} \\
\small (a) & \small (b)
\end{tabular}
\caption{Performance bounds and simulated values for (a) expected gap $\E[V_*(n)|\cdot]$ and (b) expected gain $\E[Z_*(n)|\cdot]$ after running a single iteration of \textsc{Exhaustive-Rollout} on the subset sum problem and knapsack problem, respectively. For each $n$, the mean values are shown for $10^5$ simulations. \label{exhplots}}
\end{center}
\end{figure}

We again only consider the first iteration, which tries using \textsc{Blind-Greedy} after moving each item to the front of the sequence, and takes the best of these solutions. This gives an upper bound for the subset sum gap and a lower bound on the knapsack problem gain following from additional iterations. For the subset sum problem, let $V_*(n)$ denote the gap obtained after a single iteration of \textsc{Exhaustive-Rollout} on the stochastic model with $n$ items. We have the following bounds.
\begin{theorem}
For the subset sum problem, the gap $V_*(n)$ after running a single iteration of \textsc{Exhaustive-Rollout} satisfies
\bea
\E \left [ V_*(n) \left \vert \sum_{i=1}^nW_i>B \right. \right ] &\le& \frac{1}{n(2+n)} + \frac{1}{n} \sum_{m=0}^{n-2}  \frac{9+2 m}{3(3+m)(4+m)}.
\eea \label{exhthm}
\end{theorem}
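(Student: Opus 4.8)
The plan is to condition on the critical index $K=k$, using the fact (from Lemma \ref{lemmakunif}) that $\PP(K=k \mid \sum_i W_i > B) = 1/n$ for each $k$, and to treat the first iteration of \textsc{Exhaustive-Rollout} as the minimum, over all choices of ``first item,'' of the gap returned by \textsc{Blind-Greedy}. The central structural observation is geometric, read off Figure \ref{vis}: placing an item $f$ at the front and re-running the greedy policy packs $f$ together with the longest feasible prefix of $\langle 1,2,\ldots\rangle$, so the resulting gap equals $t(w_f)-w_f$, where $t(w_f)$ is the smallest of the ``prefix-gap targets'' $g,\ g+W_{K-1},\ g+W_{K-1}+W_{K-2},\dots$ that is at least $w_f$. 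Moving a packed item ($f<K$) leaves the gap at $g$, so the only way to beat $G$ is to move an unpacked item ($f\ge K$) whose weight sits just below one of these targets. This reduces the problem to a ``best fit'' computation against the targets, where by Lemma \ref{exhindep} the free items $W_{K+1},\dots,W_n$ are i.i.d.\ $\mc{U}[0,1]$ and $W_K\sim\mc{U}[g,1]$, conditioned on $K=k$ and $G=g$.

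I would then split by $k$, mirroring the stated bound (one isolated term plus a sum over $m=0,\ldots,n-2$ with $m=n-k$). For $K=1$ there are no packed items, hence only the single target $g$; the gap is exactly $g - \max\{W_f : f\ge 2,\ W_f\le g\}$, a max-of-uniforms over the $n-1$ free items, and integrating against $f_G(g)=2(1-g)$ (Theorem \ref{greedythm}) gives $\E[V_*(n)\mid K=1]=\tfrac{1}{n+2}$, which supplies the isolated term $\tfrac{1}{n(n+2)}$. For $2\le k\le n$ the last packed item creates a second target $t_1=g+W_{K-1}$, so each unpacked item can usefully fill either $[0,g]$ or $(g,\,g+W_{K-1}]$; it is precisely this extra target, absent at $k=1$, that yields the strict improvement and produces the summand $\frac{9+2m}{3(3+m)(4+m)}=\frac{1}{m+3}-\frac{1}{3(m+4)}$.

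To execute the $k\ge 2$ case I would condition additionally on $G=g$ and $W_{K-1}=a$ and define the two best-fit quantities $X_1 = g-\max\{W_f\le g\}$ (contributed only by free items, since $W_K>g$) and $X_2 = (g+a)-\max\{W_f\in(g,g+a]\}$ (contributed by the free items and the critical item), so that $V_*(n)\le\min(X_1,X_2)$. The key simplification is that the ranges $[0,g]$ and $(g,g+a]$ are disjoint, so $X_1$ and $X_2$ depend on disjoint sets of weights and are independent given $(g,a)$; hence $\E[\min(X_1,X_2)\mid g,a]=\int_0^{g}\PP(X_1>v)\,\PP(X_2>v)\,\dd v$. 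Here $\PP(X_1>v\mid g)=(1-v)^m$ is immediate, and $\PP(X_2>v\mid g,a)$ is the probability that no unpacked item lands in $[g+a-v,\,g+a]$, computed from the $\mc{U}[0,1]$ free items together with the truncated $\mc{U}[g,1]$ critical weight. Integrating the product over $v$, then over $a\sim\mc{U}[0,1]$ and $W_K$, and finally over $g$ with density $2(1-g)$, should collapse to $\frac{9+2m}{3(3+m)(4+m)}$; summing the conditional bounds with weights $1/n$ then gives the theorem.

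The main obstacle is twofold. First, the structural reduction must be justified with care: I must verify that moving a packed item never helps and that when an unpacked item exceeds all the low targets, the resulting cascade (a new critical item among $1,\ldots,K-1$) can be discarded for an upper bound, i.e.\ that restricting to the two targets $g$ and $g+W_{K-1}$ only enlarges the gap. Second, evaluating $\PP(X_2>v)$ and the ensuing integrations is genuinely case-laden, since the interval $[g+a-v,g+a]$ must be intersected with $[g,1]$ for the truncated critical weight while tracking whether $g+a\lessgtr 1$ and $v\lessgtr a$; this bookkeeping, rather than any conceptual difficulty, is where the real work lies.
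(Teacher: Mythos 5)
Your structural reduction (the targets $g$ and $g+W_{K-1}$, discarding deeper cascades as a worst-case value of one, packed items never improving on $G$) and your $K=1$ computation are sound and match the paper's Lemmas \ref{pack insert}, \ref{remain insert}, \ref{crit insert}, and \ref{c1}. The genuine gap is in your key factorization step for $k \ge 2$: the claim that $X_1$ and $X_2$ ``depend on disjoint sets of weights and are independent given $(g,a)$'' is false. The two target \emph{ranges} are disjoint, but both $X_1$ and $X_2$ are built from the same pool of free items, and which items land in which range is shared randomness. Concretely, $\{X_1>v\}$ is the event that no free item lies in $(g-v,g]$ and $\{X_2>v\}$ is the event that no free item (and not the critical item) lies in $(g+a-v,g+a]$; for a single uniform weight the joint avoidance probability is $1-2v$, while the product of marginals is $(1-v)^2 = 1-2v+v^2$. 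So $X_1$ and $X_2$ are negatively correlated, not independent. Two consequences follow: (i) by luck the error goes in the right direction, since negative correlation gives $\PP(X_1>v,X_2>v) \le \PP(X_1>v)\PP(X_2>v)$, so $\int_0^g \PP(X_1>v)\PP(X_2>v)\,\dd v$ is still a valid upper bound on $\E[\min(X_1,X_2)\mid g,a]$; but (ii) it is a \emph{strictly larger} quantity than the correct joint integral (per free item, $(1-v)^2 > 1-2v$ on $v\in(0,\tfrac12)$), so your integrations will not collapse to $\frac{9+2m}{3(3+m)(4+m)}$, and the theorem with its stated constants would not be established by this route.

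The fix is to factor over \emph{items} rather than over \emph{targets}, which is exactly what the paper does. Conditioned on $(G,W_{K-1})=(g,a)$, define for each $j\ge k$ the truncated insertion gap $V_j^u$: it equals $g-W_j$ if $W_j\le g$, equals $g+a-W_j$ if $g<W_j\le g+a$, and is set to $1$ otherwise (your worst-case discard). Each $V_j^u$ is a deterministic function of $W_j$ alone, and by Lemma \ref{exhindep} the weights $W_j$, $j\ge k$, are independent given $(g,a,k)$; hence $\PP\bigl(\min_{j\ge k} V_j^u > v \mid g,a\bigr) = \prod_{j\ge k}\PP(V_j^u>v\mid g,a)$ holds \emph{exactly}, with per-item factor $(g-v)_+ + (a-v)_+ - (g+a-v-1)_+ + (1-g-a)_+$ for free items (and the analogous factor with the $\mc{U}[g,1]$ density for the critical item), each factor accounting for \emph{both} target intervals within a single item's probability. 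Note that this product is precisely the correct joint survival function $\PP(X_1>v,\,X_2>v\mid g,a)$ that your proposal approximates by a product of marginals. Marginalizing the exact product over $a\sim\mc{U}[0,1]$ and $g$ with density $2-2g$ is what yields $\frac{9+2m}{3(3+m)(4+m)}$; the remainder of your outline (weights $1/n$ from Lemma \ref{lemmakunif}, the isolated $\frac{1}{n(n+2)}$ term from $K=1$) then goes through unchanged.
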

\begin{corollary}
\bea
\E \left [ V_*(n) \left \vert \sum_{i=1}^nW_i>B \right. \right ] &\le& \frac{1}{n (2+n)}+\frac{1}{n}\log \left[ \left (\frac{3+2n}{5} \right )  \left (\frac{7}{5+2n} \right )^{1/3} \right] \label{exup}.
\eea \label{exhcoro}
\end{corollary}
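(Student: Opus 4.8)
The first term $\frac{1}{n(2+n)}$ is common to both the theorem and the corollary, so the task reduces to bounding the sum $S := \sum_{m=0}^{n-2}\frac{9+2m}{3(3+m)(4+m)}$ by the logarithm $\log[(\frac{3+2n}{5})(\frac{7}{5+2n})^{1/3}]$. The plan is to first simplify the summand by partial fractions: one checks that
\[
\frac{9+2m}{3(3+m)(4+m)} = \frac{1}{m+3} - \frac{1}{3(m+4)}.
\]
This expresses $S$ as a difference of two harmonic-type sums, which suggests comparing the summand termwise against an increment of a suitable smooth antiderivative.

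I would define $F(n) := \log\frac{2n+3}{5} - \frac{1}{3}\log\frac{2n+5}{7}$, so that $F(n) = \log[(\frac{3+2n}{5})(\frac{7}{5+2n})^{1/3}]$ is exactly the target expression, and observe that $F(1)=0$. The goal is then the per-term domination
\[
\frac{1}{m+3} - \frac{1}{3(m+4)} \le F(m+2) - F(m+1), \qquad m \ge 0.
\]
Summing over $m = 0,\dots,n-2$ telescopes the right-hand side to $F(n) - F(1) = F(n)$, giving $S \le F(n)$; dividing by $n$ and restoring the $\frac{1}{n(2+n)}$ term then yields the corollary.

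To establish the per-term inequality, I would introduce the Hermite--Hadamard gap $\delta(x) := \log\frac{x+1/2}{x-1/2} - \frac{1}{x}$. Since $1/x$ is convex, the midpoint inequality gives $\frac{1}{x} \le \int_{x-1/2}^{x+1/2}\frac{\dd t}{t} = \log\frac{x+1/2}{x-1/2}$, so $\delta(x) \ge 0$ for $x>1/2$. Writing $F(m+2) - F(m+1) = \log\frac{2m+7}{2m+5} - \frac{1}{3}\log\frac{2m+9}{2m+7}$ and recognizing $\log\frac{2m+7}{2m+5} = \frac{1}{m+3} + \delta(m+3)$ together with $\log\frac{2m+9}{2m+7} = \frac{1}{m+4} + \delta(m+4)$, the desired inequality collapses to
\[
\delta(m+3) \ge \tfrac{1}{3}\,\delta(m+4).
\]
The main (though mild) obstacle is precisely this step: the two applications of the midpoint inequality point in opposite directions, since the leading term helps the bound while the subtracted term hurts it, so a naive termwise comparison fails and one must exploit the decay of $\delta$. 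I would settle it by computing $\delta'(x) = \frac{1}{x^2} - \frac{1}{x^2 - 1/4} < 0$, which shows that $\delta$ is positive and strictly decreasing on $(1/2,\infty)$; hence $\delta(m+3) \ge \delta(m+4) \ge \frac{1}{3}\delta(m+4)$ for every $m \ge 0$, which closes the argument.
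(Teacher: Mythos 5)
Your proof is correct. The partial-fraction identity $\frac{9+2m}{3(3+m)(4+m)} = \frac{1}{m+3} - \frac{1}{3(m+4)}$ checks out, $F(1)=0$ so the telescoping yields exactly $F(n)$, the algebraic reduction of the per-term inequality to $\delta(m+3) \ge \frac{1}{3}\delta(m+4)$ is exact, and positivity plus monotonicity of $\delta$ (via $\delta'(x) = \frac{1}{x^2} - \frac{1}{x^2-1/4} < 0$) closes that gap. In substance this is the same argument that produces the corollary's closed form (the paper defers it to the supplementary material): your telescoping increment is precisely the midpoint integral,
\[
F(m+2)-F(m+1) \;=\; \int_{m-1/2}^{m+1/2} \left( \frac{1}{x+3} - \frac{1}{3(x+4)} \right) \mathrm{d} x,
\]
so summing over $m=0,\ldots,n-2$ is the same as bounding the sum by $\int_{-1/2}^{n-3/2} f(x)\,\mathrm{d} x$ with $f(x) = \frac{2x+9}{3(x+3)(x+4)}$, whose evaluation is exactly the logarithmic expression in the statement. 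The only place you diverge from the most direct execution is in justifying the per-term (midpoint) inequality: one can simply check that the full summand is convex, since $f''(x) = \frac{2}{(x+3)^3} - \frac{2}{3(x+4)^3} > 0$, and invoke the Hermite--Hadamard inequality once for $f$ itself, with no sign conflict to repair. Your piecewise application to the two partial fractions, patched by the monotonicity of the gap $\delta$, is a correct substitute and avoids differentiating the summand twice, at the cost of the extra lemma on $\delta$; either way the bound obtained is identical.
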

\begin{theorem} \bea
\lim_{n \rightarrow \infty} \E \left [ V_*(n) \left \vert \sum_{i=1}^nW_i>B \right. \right ] = 0, \qquad \E \left [ V_*(n) \left \vert \sum_{i=1}^nW_i>B \right. \right ] = O \left ( \frac{\log n}{n} \right ).
\eea \label{ssasym}
\end{theorem}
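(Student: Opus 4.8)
The plan is to derive both assertions directly from the explicit upper bound in Corollary~\ref{exhcoro}, combined with the trivial lower bound coming from nonnegativity of the gap. Since $V_*(n)$ is the gap of a feasible packing, we have $V_*(n)\ge 0$ with probability one, hence $\E[V_*(n)|\cdot]\ge 0$. Both statements then follow by a squeeze argument, once I show that the right-hand side of \eqref{exup} is $O(\log n/n)$ and tends to zero.

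First I would dispose of the leading term: $\frac{1}{n(2+n)}=O(1/n^2)$, which is $o(\log n/n)$ and vanishes as $n\to\infty$, so it contributes nothing to either conclusion. The substance lies entirely in the logarithmic term, and the key step is to bound the argument of the logarithm by a polynomial in $n$. Using $\frac{3+2n}{5}\le n$ (valid for $n\ge 1$) together with $\left(\frac{7}{5+2n}\right)^{1/3}\le \left(\frac{7}{2n}\right)^{1/3}\le 2n^{-1/3}$, the product is at most $2n^{2/3}$, so
\be
\frac{1}{n}\log\left[\left(\frac{3+2n}{5}\right)\left(\frac{7}{5+2n}\right)^{1/3}\right] \le \frac{\log 2 + \tfrac{2}{3}\log n}{n} = O\!\left(\frac{\log n}{n}\right).
\ee
Adding the two contributions yields the $O(\log n/n)$ bound. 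Since $\log n/n\to 0$ from above and $\E[V_*(n)|\cdot]\ge 0$, the squeeze theorem forces the limit to be zero, establishing both claims.

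Honestly, there is no serious obstacle here: this is elementary asymptotic analysis of the closed-form bound already proved in Corollary~\ref{exhcoro}, so the real work was done earlier in deriving that corollary from Theorem~\ref{exhthm}. The only point that warrants a moment of care is certifying the growth exponent $2/3$ of the logarithm's argument, namely that the linearly growing factor $\frac{3+2n}{5}$ and the decaying factor $\left(\frac{7}{5+2n}\right)^{1/3}$ combine to $\Theta(n^{2/3})$; a miscalculation there would alter the constant multiplying $\log n$, though the $O(\log n/n)$ conclusion is robust to any polynomial growth rate.
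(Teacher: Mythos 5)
Your proof is correct and follows what is evidently the paper's intended route: Corollary~\ref{exhcoro} is stated precisely so that the asymptotics of Theorem~\ref{ssasym} follow from it, and your squeeze argument (nonnegativity of the gap below, the $O(1/n^2)$ term plus the logarithm of a $\Theta(n^{2/3})$ quantity above) is exactly that elementary analysis. The bounding steps $\frac{3+2n}{5}\le n$ and $\bigl(\tfrac{7}{5+2n}\bigr)^{1/3}\le 2n^{-1/3}$ are valid for $n\ge 1$, so no gap remains.
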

\nin A plot of the bounds and simulated results is shown in Figure \ref{exhplots}(a). For the knapsack problem, let $Z_*(n)$ denote the gain given by a single iteration of \textsc{Exhaustive-Rollout}. The expected gain is bounded by the two following theorems, where $H(n)$ is the $n$th harmonic number, $H(n) = \sum_{\ell=1}^n \frac{1}{\ell}$.
\begin{theorem}
For the knapsack problem, the gain $Z_*(n)$ after running a single iteration \textsc{Exhaustive-Rollout} satisfies
\bea
&& \E \left [Z_*(n) \left \vert \sum_{i=1}^nW_i > B \right. \right ] \ge 1 + \frac{2}{n(n+1)} - \frac{2H(n)}{n^2} \n \\
&&+\frac{1}{n}\sum_{m=0}^{n-2} \left ( \sum_{j=1}^{m+1} T(j,m) + \left( (186+472m + 448m^2 + 203m^3 + 45 m^4 + 4m^5) \right. \right. \n \\
&& -(244 + 454m + 334m^2 + 124m^3 + 24m^4 + 2m^5) H(m+1) \n \\
&& \left. \left . -(48 + 88m + 60m^2 + 18m^3 + 2m^4) (H(m+1))^2 \right ) \frac{1}{(m+1)(m+2)^3(m+3)^2}  \right ), \n \\
\eea
where
\bea
T(j,m) &:=& \frac{2\left(-4+j-4 m+j m-m^2-\left(j+(2+m)^2\right) H(j)\right)}{j (-3+j-m) (-2+j-m) (1+m) (2+m)}\n \\
&&+\frac{2(j+(2+m)^2)H(3+m)}{j (-3+j-m) (-2+j-m) (1+m) (2+m)}.
\eea
\label{kpexhthm}
\end{theorem}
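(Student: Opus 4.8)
The plan is to mirror the structure of the subset-sum argument behind Theorem~\ref{exhthm}, but to track profits rather than weights and to produce a \emph{lower} bound on the gain by restricting attention to a tractable family of front-moves. First I would invoke sequential consistency of \textsc{Blind-Greedy}, so that the value returned by the full \textsc{Exhaustive-Rollout} is at least $\max_i U_i$, the value of the best single front-move. Writing $Z_*(n) \ge \max_i U_i - \sum_{i=1}^{K-1} P_i$, it suffices to lower bound the expected improvement of the best front-move over the base policy. Because the rollout maximizes over \emph{all} $n$ front-moves, I am free to discard moves and still retain a valid lower bound; the art is in keeping enough moves to get a strong constant while keeping the expectation computable.

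Next I would condition on the critical index $K$, which by Lemma~\ref{lemmakunif} is uniform on $\{1,\dots,n\}$ given $\sum_{i=1}^n W_i > B$, contributing the factor $\frac{1}{n}$ in the statement. Lemmas~\ref{othersindep}, \ref{greedygapcond}, and \ref{exhindep} then supply the conditional laws I need: given $K=k$ and gap $G=g$, the non-critical weights are i.i.d.\ $\mc{U}[0,1]$, $W_K \sim \mc{U}[g,1]$, and all profits are i.i.d.\ $\mc{U}[0,1]$ independent of the weights. The case $K=n$, which has no post-critical items, is treated on its own and contributes to the terms outside the sum; for $1 \le k \le n-1$ I would relabel the $n-k$ post-critical items and set $m = n-k-1$, so that $m+1 = n-k$ is the number of such items and $j$ in $T(j,m)$ indexes them. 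This is exactly the bookkeeping that turns the case analysis into the single sum $\sum_{m=0}^{n-2}$.

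The heart of the argument is characterizing the gain of a front-move. The cleanest family consists of moving a post-critical item $j$ with $W_j \le G$ to the front: such an item fits inside the gap, all originally packed items $1,\dots,K-1$ remain feasible, item $K$ still does not fit, and the solution gains exactly the profit $P_j$. Restricting to this family, the conditional gain is at least $\E[\max\{P_j : j>K,\ W_j \le G\}\mid K,G]$, so I would condition on the number $r$ of post-critical items that fit (Binomial with parameter $g$), use that the maximum of $r$ i.i.d.\ $\mc{U}[0,1]$ profits has mean $\tfrac{r}{r+1}$, and sum against the binomial weights. To sharpen the constant I would also include the front-move that places the critical item $K$ first (valid for $K\ge 2$), whose gain $P_K - \sum_{i=L}^{K-1}P_i$ after repacking must be combined with the previous family inside a single maximum; this interaction is what introduces the order-statistic sum $T(j,m)$ and, after integrating the reduction in gap against the lost profits, the $(H(m+1))^2$ terms.

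Finally I would integrate the conditional gain over $G$ using Lemma~\ref{greedygapcond} and over $W_K$ using Lemma~\ref{othersindep} (equivalently Lemma~\ref{exhindep}), obtaining a closed form in $m$, and then combine the $K=n$ term with $\tfrac{1}{n}\sum_{m=0}^{n-2}(\cdot)$ via the total expectation theorem to reach the stated expression. I expect the main obstacle to be the profit/weight coupling in the maximization: unlike the subset-sum case, where the best move simply minimizes a gap, here the best move maximizes profit, so I must take a maximum over front-moves of \emph{different} types---fitting post-critical items and the repacking critical-item move---whose values are correlated through the shared gap and weights. Turning this maximum into an exact, integrable expression, and verifying that the discarded moves only strengthen the lower bound, is where the delicate order-statistic computation producing $T(j,m)$ and the squared harmonic numbers resides; the remaining integrals over $g$ and $w_K$ are routine but lengthy.
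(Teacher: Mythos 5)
Your high-level architecture is the right one, and it matches the paper's method (the knapsack proof in the supplement mirrors the proof of Theorem \ref{exhthm} exactly as you propose): condition on the critical index $K$ via Lemma \ref{lemmakunif}, use Lemmas \ref{othersindep}--\ref{exhindep} for the conditional laws, restrict the maximization to tractable front-moves---post-critical items $j$ with $W_j \le G$, which gain exactly $P_j$, together with the critical-item move---and then integrate over $G$, $W_{K-1}$, $W_K$. However, there is a concrete error in your case decomposition, and it is not cosmetic: with your assignment of cases the plan cannot produce the stated expression.

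You treat $K=n$ as the special case contributing the terms outside the summation, and fold $K=1$ into the sum at $m=n-2$; it must be the other way around. When $K=1$, \textsc{Blind-Greedy} packs nothing and the first-iteration gain is exactly $\max_{j\ge 2} P_j \I(W_j \le G)$: conditioned on $G=g$ the number of fitting items is $\mathrm{Bin}(n-1,g)$, the maximum of $r$ uniform profits has mean $r/(r+1)$, and integrating against $f_G(g)=2-2g$ gives $\E[Z_*(n)\mid K=1] = 1 + \tfrac{2}{n+1} - \tfrac{2H(n)}{n}$; its $\tfrac1n$-weighted value plus the baseline $\tfrac{n-1}{n}$ from the remaining cases is exactly $1 + \tfrac{2}{n(n+1)} - \tfrac{2H(n)}{n^2}$. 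This is the only case that can generate $H(n)$, since it is an order statistic over $\Theta(n)$ items. By contrast, when $K=n$ there are no post-critical items; the only move that can gain is moving item $n$ to the front, which always displaces item $n-1$ (because $W_n > G$), so $Z_*(n) \le (P_n - P_{n-1})_+$ and $\E[Z_*(n)\mid K=n] \le \tfrac16$. Under your assignment, the terms outside the sum would require $\E[Z_*(n)\mid K=n] \ge 1 + \tfrac{2}{n+1} - \tfrac{2H(n)}{n}$, which already exceeds $\tfrac16$ at $n=3$ and tends to $1$ as $n \to \infty$---impossible. The $K=n$ case instead sits inside the sum at $m=0$: the theorem's $m=0$ term evaluates to $T(1,0) - \tfrac{106}{72} = \tfrac{7}{12}-\tfrac{53}{36} = -\tfrac89$, i.e.\ a guaranteed gain of $\tfrac19$, which is precisely $\PP(W_n - G \le W_{n-1})\,\E[(P_n-P_{n-1})_+] = \tfrac23\cdot\tfrac16$ from the critical-move-only analysis. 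So the correct bookkeeping is $m = n-K$ for $K \ge 2$ (with $j = 1,\ldots,m+1$ running over items $K,\ldots,n$, the critical item included, not just the post-critical ones) and $K=1$ handled separately. With that fix your plan realigns with the paper's proof; note, though, that even then your proposal defers rather than carries out the actual derivation of $T(j,m)$ and the $(H(m+1))^2$ terms from the joint law of the critical-item move and the fitting-item maxima, which is where the bulk of the work lies.
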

\begin{theorem}
\bea
\lim_{n \rightarrow \infty} \E \left [Z_*(n) \left \vert \sum_{i=1}^nW_i>B \right. \right ] = 1, \qquad 1 - \E \left [ Z_*(n) \left \vert \sum_{i=1}^nW_i>B \right. \right ]  = O\left (\frac{\log^2 n}{n} \right ).
\eea
\label{kpasym}
\end{theorem}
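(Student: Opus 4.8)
The plan is to prove both claims at once by sandwiching $\E[Z_*(n)\mid\cdot]$ between $1$ above and the explicit lower bound of Theorem \ref{kpexhthm} below, then extracting the growth rate from the latter. The easy direction is the upper bound $\E[Z_*(n)\mid\cdot]\le 1$, which I would obtain from the almost-sure bound $Z_*(n)\le 1$. Recall that \textsc{Blind-Greedy} packs $\langle 1,\dots,K-1\rangle$ with value $U_{\mathrm{greedy}}=\sum_{i=1}^{K-1}P_i$, and that one iteration of \textsc{Exhaustive-Rollout} returns $\max_i U_i$, where $U_i$ is the value of moving item $i$ to the front and re-running the base policy. For $i<K$ nothing changes, so $U_i=U_{\mathrm{greedy}}$; for $i\ge K$, the re-run packs $i$ followed by a prefix of $\langle 1,2,\dots\rangle$, and since packing $i$ first can never let the re-run fit more than a subset of $\{1,\dots,K-1\}$ alongside it, the new packed set is contained in $\{i\}\cup\{1,\dots,K-1\}$. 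Hence $U_i\le P_i+U_{\mathrm{greedy}}$ for every $i$, so $Z_*(n)=\max_i U_i-U_{\mathrm{greedy}}\le\max_i P_i\le 1$, giving $\E[Z_*(n)\mid\cdot]\le 1$.

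For the matching lower bound and the convergence rate I would work directly with the bound in Theorem \ref{kpexhthm}, writing $\E[Z_*(n)\mid\cdot]\ge 1-R(n)$ and showing $R(n)=O(\log^2 n/n)$. The first two correction terms, $2/(n(n+1))$ and $2H(n)/n^2$, are $O(\log n/n^2)$ and hence negligible. The substance lies in the term $\frac1n\sum_{m=0}^{n-2}s(m)$, where $s(m)$ denotes the bracketed summand. I would first expand the rational-times-harmonic part of $s(m)$ for large $m$: dividing the degree-five, -five, and -four polynomials by the denominator $(m+1)(m+2)^3(m+3)^2\sim m^6$ yields a leading behavior $4/m-2H(m+1)/m-2(H(m+1))^2/m^2$, which, since $H(m+1)=\log m+O(1)$, is dominated by $-2\log m/m$. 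Summing over $m$ and using $\sum_{m\le N}(\log m)/m=\tfrac12\log^2 N+O(\log N)$ produces a contribution of order $-\log^2 n$, so after dividing by $n$ this piece is exactly $\Theta(\log^2 n/n)$, matching the claimed rate.

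The main obstacle is the remaining inner sum $\sum_{j=1}^{m+1}T(j,m)$, which carries nested harmonic numbers $H(j)$ and $H(m+1)$ and must be shown not to dominate. I would evaluate it in closed form by a partial-fraction decomposition in $j$ of the rational prefactor---noting that the denominator factors $(-3+j-m)$ and $(-2+j-m)$ never vanish for $1\le j\le m+1$, so no term blows up---and then applying the identities $\sum_{j=1}^N 1/j=H(N)$ and $\sum_{j=1}^N H(j)/j=\tfrac12\big(H(N)^2+H^{(2)}(N)\big)$ to resolve the $H(j)/j$ terms. The expectation is that $\sum_{j=1}^{m+1}T(j,m)$ decays at least as fast as $(\log^2 m)/m$, so that $\frac1n\sum_m\sum_j T(j,m)=O(\log^2 n/n)$ as well; combined with the previous paragraph this yields $R(n)=O(\log^2 n/n)$. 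Together with $\E[Z_*(n)\mid\cdot]\le 1$, this establishes both $\lim_{n\to\infty}\E[Z_*(n)\mid\cdot]=1$ and $1-\E[Z_*(n)\mid\cdot]=O(\log^2 n/n)$. I expect the bookkeeping in this harmonic-sum evaluation---keeping the polynomial coefficients and the $O(1)$ corrections to $H$ under control so that no spurious lower-order term masquerades as leading order---to be the most delicate part, whereas the upper bound and the leading-order identification are routine.
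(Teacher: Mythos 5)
Your overall strategy is sound, and it is the natural one given what the paper provides: sandwich $\E[Z_*(n)\mid\cdot]$ between the almost-sure bound $Z_*(n)\le 1$ and the explicit lower bound of Theorem \ref{kpexhthm}, then extract the rate from the latter. (The paper's own proof is relegated to the supplementary material, but it can only rest on the same two pillars.) Your upper-bound argument is correct: for $j<K$ the insertion solution coincides with the greedy one, and for $j\ge K$ the remaining capacity after packing item $j$ first is at most $B$, so the insertion critical index satisfies $L_j\le K$ and the packed set is contained in $\{j\}\cup\{1,\dots,K-1\}$; hence $U_j\le P_j+U_{\mathrm{greedy}}$ and $Z_*(n)\le\max_j P_j\le 1$. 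Your leading-order analysis of the rational/harmonic summand is also correct: it behaves as $\tfrac{4}{m}-\tfrac{2H(m+1)}{m}+O\bigl(\log^2 m/m^2\bigr)$, and since $\sum_{m\le N}(\log m)/m=\tfrac12\log^2N+O(\log N)$, averaging over $m$ gives a deficit of order $\log^2 n/n$, which is exactly the claimed rate.

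The one genuine soft spot is your treatment of $\sum_{j=1}^{m+1}T(j,m)$: you only state an \emph{expectation} that its closed-form evaluation decays like $\log^2 m/m$, and the partial-fraction/harmonic-identity computation you outline is precisely the delicate bookkeeping you admit could go wrong. You can bypass it entirely. The paper remarks, immediately after Theorem \ref{kpexhthm}, that omitting the nested summation term $\sum_j T(j,m)$ yields a looser but still valid lower bound; equivalently, that term is nonnegative (indeed, for $1\le j\le m+1$ both factors $(-3+j-m)$ and $(-2+j-m)$ are negative, so the denominator is positive, and the numerator $j(m+1)-(m+2)^2+\bigl(j+(m+2)^2\bigr)\bigl(H(m+3)-H(j)\bigr)$ can be checked to be nonnegative). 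Since you only need a \emph{lower} bound on $\E[Z_*(n)\mid\cdot]$, you may simply drop the $T$-sum and run your second-paragraph asymptotics on what remains; together with $\E[Z_*(n)\mid\cdot]\le 1$ this finishes both claims. With that substitution your proof is complete; as written, the $T$-sum step is the only part that is asserted rather than proven (your estimate is in fact conservative -- the sum is $O(\log^2 m/m^2)$ -- but establishing even the weaker statement requires the computation you did not carry out).
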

\nin The expected gain approaches unit value at a rate slightly slower than the convergence rate for the subset sum problem\footnote{This is likely a result of the fact that in the subset sum problem, the algorithm is searching  for an item with one criteria: a weight approximately equal to the gap. For the knapsack problem, however, the algorithm must find an item satisfying two criteria: a weight smaller than the gap and a profit approximately equal to one.}. The gain is plotted with simulated values in Figure \ref{exhplots}(b). While the bound in Theorem \ref{kpexhthm} does not admit a simple integral bound, omitting the nested summation term  $\sum_{j=1}^m T(j,m)$ gives a looser but valid bound. We show the proof of Theorem \ref{exhthm} in the remainder of this section. All remaining results are given in the supplementary material.

\subsection{Exhaustive rollout: subset sum problem analysis}
The proof method for Theorem \ref{exhthm} is similar to the approach taken in the previous section. With Figure \ref{vis} in mind, we will analyze the effect of individually moving each item to the front of the sequence, which will cause the other items to shift to the right. Our strategy is to perform this analysis while conditioning on three parameters: the greedy gap $G$, the critical item $K$, and the weight of the last packed item $W_{K-1}$. We then find the minimum gap given by trying all items and integrate over conditioned variables to obtain the final bound.


To analyze solutions obtained by using \textsc{Blind-Greedy} after moving a given item to the front of the sequence, we introduce two definitions. The \textit{$j$th insertion critical item} $L_j$ is the first item that is infeasible to pack by \textsc{Blind-Greedy} when item $j$ is moved to the front of the sequence. Equivalently, $L_j$ satisfies
\bea
\left \{
\begin{array}{ll}
\ds W_j + \sum_{i=1}^{L_j-1} W_i~\I(i \ne j) \le B < W_j + \sum_{i=1}^{L_j} W_i~\I(i \ne j) & \qquad W_j \le B \\
L_j = j & \qquad W_j > B.
\end{array}
\right.
\eea
We then define the corresponding \textit{$j$th insertion gap} $V_j$, which is the gap given by the greedy algorithm when item $j$ is moved to the front of the sequence:
\bea
V_j := B - \I(W_j \le B) \left (  W_j + \sum_{i = 1}^{L_j-1} W_i ~ \I(i \ne j) \right).
\eea

In the following three lemmas, we bound the probability distribution of the insertion gap for packed items ($j\le K-1$), he critical item ($j = K$), and the remaining items ($j \ge K+1$), while assuming that $K>1$. Lemma \ref{c1} then handles the case where $K=1$. Thereafter we bound the minimum of these gaps and the greedy gap $G$, and finally integrate over the conditioned variables to obtain the bound on the expected minimum gap. The key analysis is illustrated in the proof of Lemma \ref{remain insert}; the related proofs of Lemma \ref{crit insert} and Lemma \ref{c1} are given in the supplementary material. The event $\mc{C}_j$ again indicates that item $j$ is critical, and $\bar{\mc{C}_1}$ indicates the event that the first item is not critical.

\begin{lemma}
For $K>1$ and $j = 1,\ldots,K-1$, the $j$th insertion gap satisfies
\bea
V_j = G
\eea
with probability one.
\label{pack insert}
\end{lemma}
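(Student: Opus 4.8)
The plan is to treat this as a deterministic statement that holds for every realization of the weights (hence with probability one), exploiting the fact that feasibility of a collection of items under \textsc{Blind-Greedy} is insensitive to the order in which they are presented, provided the whole collection fits. First I would record the defining inequalities for the critical item from \eqref{critdef}: the packed items satisfy $\sum_{i=1}^{K-1} W_i \le B$ while $\sum_{i=1}^{K} W_i > B$. The second inequality immediately yields $W_K > B - \sum_{i=1}^{K-1} W_i = G$. I would also note that any packed item obeys $W_j \le \sum_{i=1}^{K-1} W_i \le B$, so the indicator $\I(W_j \le B)$ appearing in the definition of $V_j$ equals one for every $j \le K-1$.

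The core step is to trace \textsc{Blind-Greedy} on the reordered sequence $\bs{I^j} = \lan j, 1, \ldots, j-1, j+1, \ldots, n \ran$ for a fixed packed index $j \le K-1$. Because the total weight of the packed items is at most $B$, every prefix of any permutation of $\lan 1, \ldots, K-1 \ran$ also has weight at most $B$; consequently \textsc{Blind-Greedy} accepts item $j$ first and then accepts each of $1, \ldots, j-1, j+1, \ldots, K-1$ in turn without ever exceeding capacity. Since $j \le K-1$, the packed indices occupy exactly the first $K-1$ positions of $\bs{I^j}$, so item $K$ is the next item encountered once these are exhausted. At that point the consumed capacity is exactly $\sum_{i=1}^{K-1} W_i$, leaving residual capacity $G$, and $W_K > G$ forces item $K$ to be infeasible. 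Hence the insertion critical item is unchanged, $L_j = K$, and the algorithm halts having packed precisely the set $\{1, \ldots, K-1\}$.

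It then remains only to substitute into the definition of $V_j$. With $\I(W_j \le B) = 1$ and $L_j = K$, the total packed weight is $W_j + \sum_{i=1}^{K-1} W_i \, \I(i \ne j) = \sum_{i=1}^{K-1} W_i$, so $V_j = B - \sum_{i=1}^{K-1} W_i = G$, which is the claim. The hypothesis $K > 1$ is needed only to make the index range $j = 1, \ldots, K-1$ nonempty.

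Since the argument carries essentially no probabilistic content, I do not anticipate a genuine obstacle. The one point requiring care is to verify that the reordering cannot make \textsc{Blind-Greedy} either reject a previously packed item prematurely or skip past item $K$: the former is excluded because a collection whose total fits has every prefix fitting, and the latter because moving an index $j \le K-1$ to the front keeps item $K$ as the first unpacked index in the new order. Establishing these two order-invariance facts cleanly is the only delicate part of an otherwise routine verification.
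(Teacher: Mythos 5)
Your proof is correct and takes essentially the same approach as the paper, which disposes of the lemma in one line by noting that the term $\sum_{i=1}^{K-1}W_i$ in \eqref{critdef} does not depend on the order of summation. Your write-up merely makes explicit the supporting details (prefix feasibility of the reordered packed items, and that item $K$ remains the first infeasible item), which the paper treats as trivial.
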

\begin{proof}
This follows trivially since the term $\sum_{i=1}^{K-1}W_i$ in \eqref{critdef} does not depend on the order of summation.
\end{proof}


\begin{lemma}
\label{remain insert}
For $K>1$ and $j = K+1,\ldots,n$, the $j$th insertion gap satisfies $V_j \le V_j^u$ with probability one, where $V_j^u$ is a deterministic function of $(G, W_{K-1}, W_j)$ and conditioning only on $(G ,W_{K-1})$ gives
\bea
\PP(V^u_j > v|g,w_{K-1}, \overline{\mc{C}_1}) &=& (g-v)_+ + (w_{K-1}-v)_+ - (g+w_{K-1}-v-1)_+ + (1-g-w_{K-1})_+ \label{torri} \n \\
&=:& \PP(V^u > v|g,w_{K-1}, \overline{\mc{C}_1}).
\eea
\end{lemma}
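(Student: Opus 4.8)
The plan is to mirror the consecutive-rollout argument: fix $K=k$ with $2\le k\le n-1$ and condition on the gap $G=g$ and the last packed weight $W_{k-1}=w_{k-1}$, then read the insertion gap $V_j$ off the geometric picture of Figure \ref{vis}. Since $j\ge k+1$ lies to the right of the packed block, moving item $j$ to the front simply shifts items $1,\dots,k-1$ rightward by $W_j$ while leaving their relative order intact. Viewed as a function of $W_j$, the resulting gap is a sawtooth that falls at unit rate as $j$ absorbs more of the slack and jumps upward each time the rightward push expels one more of the items $k-1,k-2,\dots$ from the knapsack.

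I would then separate three regimes of $W_j$. If $W_j\le g$, items $j,1,\dots,k-1$ all remain feasible (their weight is $W_j+(B-g)\le B$) and item $k$ is still infeasible because $W_k>g\ge g-W_j$, so \textsc{Blind-Greedy} halts with $V_j=g-W_j$. If $g<W_j\le g+w_{k-1}$, items $j,1,\dots,k-2$ fit but item $k-1$ is now pushed out, so $V_j=g+w_{k-1}-W_j$. In both regimes $V_j$ is a deterministic function of $(G,W_{k-1},W_j)$ and I set $V_j^u:=V_j$. In the final regime $W_j>g+w_{k-1}$ the new critical item lies among $1,\dots,k-2$, so the gap depends on $w_{k-2},w_{k-3},\dots$, which we are deliberately not conditioning on; there (and in the sub-case $W_j>B$, which can only occur when $B<1$ and yields $V_j=B$) I would use the crude deterministic bound $V_j^u:=1$. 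This is legitimate because a \textsc{Blind-Greedy} gap is always strictly smaller than the weight of its critical item (as in Lemma \ref{greedygapcond}), so $V_j<1$ on this regime and hence $V_j\le V_j^u$ with probability one.

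Finally, Lemma \ref{exhindep} makes $W_j$ independent of the conditioned variables and uniform on $[0,1]$, so $\PP(V_j^u>v\mid g,w_{k-1},\overline{\mc{C}_1})$ is just the length of $\{w_j\in[0,1]:V_j^u>v\}$. The first regime contributes $(g-v)_+$ and the second contributes $(w_{k-1}-v)_+$, clipped at $W_j=1$; this clipping removes $(g+w_{k-1}-v-1)_+$ precisely when $g+w_{k-1}>1$. The third regime, on which $V_j^u=1>v$ for every $v<g$, contributes its entire length $(1-g-w_{k-1})_+$. Summing yields \eqref{torri}, and since the expression depends on $k$ only through $w_{k-1}$ it is unchanged under the coarser conditioning on $\overline{\mc{C}_1}$.

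The main thing to get right is the treatment of the third regime: replacing $V_j$ by the constant $1$ is valid because $V_j<1$ always, and it is harmless for the rest of the proof because the quantity ultimately integrated is $\min(V_j,G)$, for which this regime only records the trivial bound $\min(V_j,G)\le G$. The remaining work — tracking how the sawtooth is truncated at $W_j=1$ in the two cases $g+w_{k-1}\le 1$ and $g+w_{k-1}>1$ — is routine positive-part bookkeeping once the breakpoints $W_j=g$ and $W_j=g+w_{k-1}$ are located.
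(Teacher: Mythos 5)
Your proposal is correct and follows essentially the same route as the paper: you define $V_j^u$ to agree with $V_j$ on the two regimes where the new critical item is $k$ or $k-1$ (equivalently $W_j \le g+w_{k-1}$), set $V_j^u=1$ on the remaining regime where the gap would depend on $w_{k-2},w_{k-3},\dots$, and then read off $\PP(V_j^u>v)$ as the Lebesgue measure of the corresponding $w_j$-set using the uniformity and independence of $W_j$ from Lemma \ref{exhindep}. The positive-part bookkeeping, including the clipping term $(g+w_{k-1}-v-1)_+$ and the $(1-g-w_{k-1})_+$ contribution of the worst-case regime, matches the paper's derivation exactly.
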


\begin{proof}
Fix $K=k$ for $k>1$. To simplify notation make the event $\overline{\mc{C}_1}$ implicit throughout the proof.  Define the random variable $V_j^u$ so that 
\bea
V_j^u = \left \{
\begin{array}{ll}
V_j & L_j = k \vee L_j = k-1 \n \\
1 & L_j \le k-2 \vee L_j = j.
\end{array}
\right.
\eea
While $V_j$ may in general depend on $(G, W_j, W_1,\ldots,W_{k-1})$, the variable $V_j^u$ is chosen so that it only depends on $(G, W_{k-1}, W_j)$. In cases where $V_j$ does only depend on $(G,W_{k-1},W_j)$, we have $V_j^u = V_j$. When $V_j$ depends on more than these three variables, $V_j^u$ assumes a worst-case bound of unit value.

We begin by analyzing the case where $L_j = k \vee L_j = k-1$ so that the insertion gap $V_j$ is equal to $V_j^u$. For $G=g$ and $\bs{W_I} = \bs{w_I}$, a diagram illustrating the insertion gap as determined by $g$, $w_{k-1}$, and $w_j$ is shown in Figure \ref{eg}. The knapsack is shown at the top of the figure with items packed sequentially from left to right. The plot at the bottom shows the insertion gap $V_j$ that occurs when item $j$ is inserted at the front of the sequence, causing the remaining packed items to slide to the right. The plot is best understood by visualizing the effect of varying sizes of $w_j$. If $w_j$ is very small, the items slide to the right and reduce the gap by the amount $w_j$. Clearly if $w_j = g$ then $v_j = 0$ as indicated by the function. As soon as $w_j$ is slightly larger than $g$, it is infeasible to pack item $k-1$ and the gap jumps. Thus for the instance shown, the $j$th insertion gap is a deterministic function of $(g, w_{k-1}, w_j)$.

\begin{figure}[h]
\begin{center}
\includegraphics[scale=0.7] {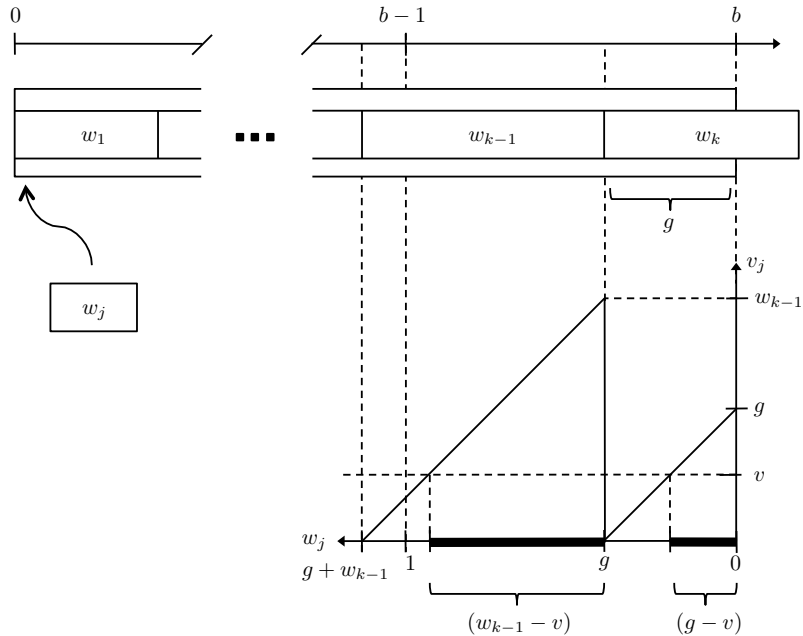}
\caption{Insertion gap $v_j$ as a function of $w_j$, parameterized by $(w_{k-1},g)$. The function starts at $g$ and decreases at unit rate, except at $w=g$ where the function jumps to value $w_{k-1}$. The probability of the event $V_j > v$ conditioned only on $w_{k-1}$ and $g$ is given by the total length of the bold regions, assuming that $v < g$ and $g+w_{k-1}-v \le 1$. Based on the sizes of $g$ and $w_{k-1}$ shown, only the events $L_j = k$ and $L_j = k-1$ are possible. \label{eg}}
\end{center}
\end{figure}

Considering the instance in the figure, if we only condition on $g$ and $w_{k-1}$ and allow $W_j$ to be random, then $V_j$ becomes a random variable whose only source of uncertainty is $W_j$. Since by Lemma \ref{exhindep} $W_j$ has distribution $\mc{U}[0,1]$, the probability of the event $V_j > v$ is given by the length of the bold regions on the $w_j$ axis. 

We now explicitly describe the length of the bold regions for all cases of $w_{k-1}$ and $g$; this will include the case $L_j = k-2~\vee~L_j= j$ (not possible for the instance in the figure), so the length of the bold regions will define $V_j^u$. Starting with the instance shown, we have $\PP(V_j^u>v|g,w_{k-1}) =\PP(V_j>v|g,w_{k-1}) = (g-v) + (w_{k-1}-v)$ as given by the lengths of the two bold regions, corresponding to the events $L_j = k$ and $L_j = k-1$, respectively. This requires that $v\le g$ and $v \le w_{k-1}$, so the expression becomes $\PP(V_j^u>v|g,w_{k-1}) = (g-v)_+ + (w_{k-1}-v)_+$. We must account for the case where $g+w_{k-1}-v>1$, requiring that we subtract length $(g+w_{k-1}-v-1)$, so we revise the expression to $\PP(V_j^u>v|g,w_{k-1}) = (g-v)_+ + (w_{k-1}-v)_+-(g+w_{k-1}-v-1)_+$. Finally, for the case of $g+w_{k-1}<1$, we must take care of the region where $w_i \in [g+w_{k-1},1]$. It is at this point that the event $L_j \le k-2$ or $L_j = j$ becomes possible and the distinction between $V_j^u$ and $V_j$ is made. Here we have by definition $V_j^u = 1$, which trivially satisfies $V_j \le V_j^u$, so for any $0 \le v < 1$ this region contributes $(1-g-w_{k-1})$ to $\PP(V_j^u>v|g,w_{k-1}).$ This is handled by adding the term $(1-g-w_{k-1})_+$ to the expression. We finally arrive at
\bea
\PP(V_j^u> v|g,w_{k-1}) &=& (g-v)_+ + (w_{k-1}-v)_+ - (g+w_{k-1}-v-1)_+ + (1-g-w_{k-1})_+.
\eea
This holds true for any fixed $k$ as long as $k >1$, so we may replace $w_{k-1}$ with $w_{K-1}$ and make the event $\overline{\mc{C}_1}$ explicit to obtain the statement of the lemma.
\end{proof}

\begin{lemma}
\label{crit insert}
For $K>1$, the $K$th insertion gap satisfies $V_K \le V_K^u$ with probability one, where $V_K^u$ is a deterministic function of $(G, W_{K-1}, W_K)$ and conditioning only on $(G, W_{K-1})$ gives
\bea
\PP(V_K^u>v|g,w_{K-1},\overline{\mc{C}_1}) &=& \left (\frac{1}{1-g} \right) \left ( (w_{K-1}-v)_+-(g+w_{K-1}-v-1)_++(1-g-w_{K-1})_+ \right ) \n \\
&=:& \PP(\widetilde{V}^u>v|g,w_{K-1},\overline{\mc{C}_1}).
\eea
\end{lemma}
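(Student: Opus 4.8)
The plan is to mirror the proof of Lemma \ref{remain insert}, adapting it for the single change that the inserted item is the critical item $K$ rather than a later item $j$. I would fix $K=k$ for $k>1$ and make the event $\overline{\mc{C}_1}$ implicit throughout. As in Lemma \ref{remain insert}, I define a dominating variable $V_K^u$ that agrees with $V_K$ when $L_K = k-1$ and equals $1$ whenever $L_K \le k-2$ or $L_K = K$ (the latter being the event $W_K > B$, where item $K$ cannot even be packed first). This makes $V_K^u$ a deterministic function of $(G, W_{K-1}, W_K)$ alone.

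The geometric picture is the same as Figure \ref{eg}: moving item $K$ to the front forces the originally packed items $1,\ldots,k-1$ to slide right by $w_K$, and the insertion gap decreases at unit rate between jumps. The crucial difference is the domain of $w_K$. By Lemma \ref{exhindep}, conditioned on $(G,K)$ the weight $W_K$ is distributed as $\mc{U}[g,1]$, so $w_K \ge g$ with probability one. Consequently we never sit on the first peak (the branch where all of $1,\ldots,k-1$ still fit, giving gap $g - w_K$); that branch requires $w_K < g$ and is excluded, which is exactly why the $(g-v)_+$ term present in Lemma \ref{remain insert} is absent here. For $w_K \in (g, g+w_{k-1})$ item $k-1$ is the first to fail, so $L_K = k-1$ and $V_K = g + w_{k-1} - w_K$; for $w_K > g + w_{k-1}$ we have $L_K \le k-2$ and set $V_K^u = 1$.

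To obtain the distribution I would condition only on $(g, w_{k-1})$ and measure, against the $\mc{U}[g,1]$ density $\tfrac{1}{1-g}$, the set of $w_K$ on which $V_K^u > v$. The second-peak branch contributes the length $(w_{k-1}-v)_+ - (g+w_{k-1}-v-1)_+$ (the subtraction correcting for the branch running past $w_K = 1$), and when $g + w_{k-1} < 1$ the remaining segment $w_K \in (g+w_{k-1},1)$, on which $V_K^u = 1$, contributes $(1-g-w_{k-1})_+$. Multiplying the total length by the density $\tfrac{1}{1-g}$ yields exactly the claimed expression; replacing $w_{k-1}$ by $W_{K-1}$ and restoring $\overline{\mc{C}_1}$ completes it.

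It remains to confirm $V_K \le V_K^u$ with probability one, and the only nonroutine point is the event $L_K = K$, i.e. $W_K > B$, where \textsc{Blind-Greedy} packs nothing and $V_K = B$ rather than a quantity bounded by $1$. Here I would note that $W_K > B$ together with $W_K \le 1$ forces $B < 1$, so $V_K = B < 1 = V_K^u$; moreover since $B = g + \sum_{i=1}^{k-1} w_i \ge g + w_{k-1}$, this event can only occur inside the segment $w_K \in (g+w_{k-1},1)$ already assigned value $1$, so it is consistent with the distribution computed above. On the second-peak branch $V_K = V_K^u$ by construction, and on the branch $L_K \le k-2$ the gap is bounded by the weight of the first infeasible item and hence by $1 = V_K^u$. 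I expect this bookkeeping around the degenerate $W_K > B$ case to be the only real obstacle; everything else is a direct transcription of the argument for Lemma \ref{remain insert} with the density and domain of $W_K$ updated via Lemma \ref{exhindep}.
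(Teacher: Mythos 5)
Your proposal is correct and follows essentially the same route the paper takes: it transcribes the geometric argument of Lemma \ref{remain insert} to the case $j=K$, with the two necessary modifications — $W_K$ is distributed as $\mc{U}[g,1]$ by Lemma \ref{exhindep} (which removes the $(g-v)_+$ branch and introduces the $\tfrac{1}{1-g}$ normalization) and the degenerate event $L_K=K$ (i.e.\ $W_K>B$) is folded into the region where $V_K^u=1$. Your bookkeeping for that degenerate case (noting $W_K>B$ forces $B<1$ and $w_K>g+w_{K-1}$) is sound, so the domination $V_K\le V_K^u$ and the stated distribution both hold.
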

\nin \textit{Proof.} Supplementary material.


\begin{lemma}
\label{c1}
For $K=1$ and $j = 2,\ldots,n$, the $j$th insertion gap is a deterministic function of $(W_j, G)$, and conditioning only on $G$ gives
\bea
\PP(V_j>v | g, \mc{C}_1) = (1-v)\I(v < g).
\eea
\end{lemma}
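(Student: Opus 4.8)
The plan is to exploit the special structure of the case $K=1$. By the definition of the critical item in \eqref{critdef}, the event $\mc{C}_1$ forces $0 \le B < W_1$, and since \textsc{Blind-Greedy} then packs nothing, the greedy gap is simply $G = B$. The essential observation is that item $1$ is too heavy to fit even in the \emph{full} capacity $B$, so it must remain infeasible after any subsequent reduction of the remaining capacity. This is exactly what collapses $V_j$ into a function of only two variables, and unlike the cases $K>1$ handled in Lemmas \ref{remain insert} and \ref{crit insert}, no upper-bounding surrogate $V_j^u$ is needed: $V_j$ is already exactly a deterministic function of $(W_j,G)$, which is why the conclusion is an equality rather than a bound.

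First I would describe the effect of moving item $j$ (with $j\ge 2$) to the front. The reordered sequence is $\lan j,1,2,\ldots\ran$ with $j$ deleted from its original slot, so \textsc{Blind-Greedy} attempts item $j$ first and item $1$ immediately afterward. If $W_j > B$, item $j$ is itself infeasible, nothing is packed, and $V_j = B = G$. If instead $W_j \le B$, item $j$ is packed, leaving remaining capacity $B - W_j$; but $W_1 > B \ge B - W_j$, so item $1$ is still infeasible and the algorithm halts with $L_j = 1$, giving $V_j = B - W_j = G - W_j$. Combining the two cases yields the closed form $V_j = G - W_j\,\I(W_j \le G)$, establishing the first claim of the lemma.

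Next I would compute the conditional distribution. By Lemma \ref{exhindep}, conditioning on $K=1$ and $G=g$ leaves $W_j$ (for $j\ne 1$) distributed as $\mc{U}[0,1]$ and independent of $g$. A short case split on $\{W_j > g\}$ versus $\{W_j \le g\}$ then gives the result. When $v \ge g$, both branches produce $V_j < g \le v$, so $\PP(V_j > v \mid g,\mc{C}_1) = 0$. When $v < g$, the branch $W_j > g$ yields $V_j = g > v$ with probability $1-g$, while the branch $W_j \le g$ yields $\{g - W_j > v\} = \{W_j < g-v\}$ with probability $g-v$; summing gives $(1-g)+(g-v) = 1-v$. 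Hence $\PP(V_j > v \mid g,\mc{C}_1) = (1-v)\,\I(v<g)$, as claimed.

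The computations above are routine; the one point requiring care is the \textbf{blocking argument}, namely that item $1$ stays infeasible after any insertion. This is what guarantees the greedy run packs at most the single inserted item and never reaches items $2,\ldots,j-1$ or any item beyond, so that $V_j$ is genuinely independent of all weights other than $W_j$ (given $G$). I expect this to be the only conceptual step; the rest is a one-variable integration against the uniform density of $W_j$.
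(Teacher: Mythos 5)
Your proof is correct and takes essentially the same approach the paper intends for this case: you use the blocking observation that $W_1 > B$ keeps item $1$ infeasible after inserting item $j$, obtain the exact closed form $V_j = G - W_j\,\I(W_j \le G)$ (so no surrogate $V_j^u$ is needed, matching the lemma's equality), and then integrate over $W_j \sim \mc{U}[0,1]$ via Lemma \ref{exhindep} to get $(1-v)\I(v<g)$. The case split and the resulting computation are exactly right; there are no gaps.
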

\nin \textit{Proof.} Supplementary material.


Recall that $V_*(n)$ is the gap obtained after the first iteration of the rollout algorithm on an instance $n$ items, which we refer to as the \textit{minimum gap},
\bea
V_*(n) := \min(V_1,\ldots,V_n).
\eea
We will make the dependence on $n$ implicit in what follows so that $V_* = V_*(n)$. We may now prove the final result.

\nin\textit{Proof of Theorem \ref{exhthm}}.
For $K=k>1$, we have $V_* \le V_*^u$ with probability one, where
\bea
V_*^u &:=& \min(G, V_k^u,V_{k+1}^u,\ldots,V_n^u).
\eea
This follows from Lemmas \ref{pack insert} - \ref{crit insert}, as $V_j = G$ for $j \le k-1$. From the analysis in Lemmas \ref{remain insert} and \ref{crit insert}, for each $j \ge k$, $V_j^u$ is a deterministic function of $(G,W_{k-1}, W_k,W_j)$. Furthermore from Lemma \ref{exhindep}, the item weights $W_j$ for $j \ge k+1$ are independently distributed on $\mc{U}[0,1]$, and $W_k$ is independently distributed on $\mc{U}[g,1]$. Thus, conditioning only on $G$ and $W_{k-1}$ makes $V_j^u$ independent for  $j \ge k$, and by the definition of the minimum function, 
\bea
\PP(V_*^u > v | g,w_{k-1},k,\overline{\mc{C}_1}) &=& \PP(G>v|g,w_{k-1},\overline{\mc{C}_1}) \PP(V_k^u >v|g,w_{k-1},\overline{\mc{C}_1}) \prod_{j=k+1}^n \PP(V_j^u >v|g,w_{k-1},\overline{\mc{C}_1}) \n \\
&=& \PP(G>v|g,w_{k-1},\overline{\mc{C}_1})\PP(\widetilde{V}^u >v|g,w_{k-1},\overline{\mc{C}_1}) \left ( \PP(V^u >v|g,w_{k-1},\overline{\mc{C}_1})  \right )^{(n-k)}. \n \\
\eea
Marginalizing over $W_{k-1}$ and $G$ using Lemma \ref{exhindep} and Theorem \ref{greedythm},
\bea
\PP(V_*^u>v|k,\overline{\mc{C}_1}) &=& \int_0^1 \int_0^1 \PP(V^u_*>v|g,w_{k-1},k,\overline{\mc{C}_1}) f_{w_{k-1}}(w_{k-1}) f_G(g) \dd w_{k-1} \dd g \label{theintupper}.
\eea
We refer to $\PP(V_*^u>v|k,\overline{\mc{C}_1})$ as $\PP(V_*^u>v|m,\overline{\mc{C}_1})$ via the substitution $M := n-K$ to simplify expressions. As shown in the appendix (see supplementary material), evaluation of the integral gives
\bea
\PP(V_*^u>v|m,\overline{\mc{C}_1}) =
\left \{ \begin{array}{cc}
\PP(V_*^u>v|m,\overline{\mc{C}_1})_{\le \frac{1}{2}} & v \le \frac{1}{2} \\
\PP(V_*^u>v|m,\overline{\mc{C}_1})_{>\frac{1}{2}} & v > \frac{1}{2},
\end{array} \right.
\eea
where
\bea
\PP(V_*^u>v|m,\overline{\mc{C}_1})_{\le \frac{1}{2}} &=& \frac{1}{3(3+m)} \left (  2m(1-2 v)^m+m (1-v)^m+ 9(1-v)^{3+m} \right. \n \\
&& -12m(1-2 v)^m v -3m (1-v)^m v + 24m(1-2 v)^m v^2 \n \\
&&\left. +3m(1-v)^m v^2-16 m (1-2 v)^m v^3 -m (1-v)^m v^3 \right ), \\
 \PP(V_*^u>v|m,\overline{\mc{C}_1})_{>\frac{1}{2}}&=& \frac{1}{3} (1-v)^{3+m}+\frac{2 (1-v)^{3+m}}{3+m}.
\eea
Calculating the expected value gives a surprisingly simple expression
\be
\E[V_*^u|m,\overline{\mc{C}_1}] = \int_0^1 \PP(V_*^u>v|m,\overline{\mc{C}_1}) \dd v  = \frac{9+2 m}{3(3+m)(4+m)}.
\ee
We now consider the case $\mc{C}_1$ where the first item is critical. By Lemma \ref{c1}, each $V_j$ for $j \ge 2$ is a deterministic function of $G$ and $W_j$. All $W_j$ for $j \ge 2$ are independent by Lemma \ref{othersindep}, so
\bea
\PP(V_* > v | g, \mc{C}_1) =\prod_{j=2}^n \PP(V_j > v | g, \mc{C}_1) = (1-v)^{(n-1)} \I(v<g).
\eea
Integrating over $G$ by Theorem \ref{greedythm}, we have
\bea
\PP(V_*>v|\mc{C}_1) = \int_0^1 \PP(V_* > v | g, \mc{C}_1) f_G(g) \dd g = (1-v)^{(n-1)}(1-2v+v^2),
\eea
which can be used to calculate the expected value. Finally, accounting for all cases of $K$ using total expectation and Lemma \ref{lemmakunif},
\bea
\E[V_*] &\le& \frac{1}{n} \E[V_*|\mc{C}_1] + \frac{1}{n} \sum_{m=0}^{n-2} \E[V_u^*|\overline{\mc{C}_1}, m]  = \frac{1}{n(2+n)} + \frac{1}{n} \sum_{m=0}^{n-2}  \frac{9+2 m}{3(3+m)(4+m)}. \label{bndsumupper}
\eea
Throughout all of the analysis in this section, we have implicitly assumed that $\sum_{i=1}^nW_i>B$. Making this condition explicit gives the desired bound. \qed

\section{Conclusion}
\label{conclusion}
We have shown strong performance bounds for both the consecutive rollout and exhaustive rollout techniques on the subset sum problem and knapsack problem. These results hold after only a single iteration and provide bounds for additional iterations.  Simulation results indicate that these bounds are very close in comparison with realized performance of a single iteration. We presented results characterizing the asymptotic behavior (asymptotic with respect to the total number of items) of the expected performance of both rollout techniques for the two problems.

An interesting direction in future work is to consider a second iteration of the rollout algorithm. The worst-case analysis of rollout algorithms for the knapsack problem in \cite{bertazzi12} shows that running one iteration results in a notable improvement, but it is not possible to guarantee additional improvement with more iterations for the given base policy. This behavior is generally not observed in practice \cite{bertsekas99}, and is not a limitation in the average-case scenario. 
A related topic is to still consider only the first iteration of the rollout algorithm, but with a larger lookahead length (e.g. trying all pairs of items for the exhaustive rollout, rather than just each item individually). Finally, it is desirable to have theoretical results for more complex problems. Studying problems with multidimensional state space is appealing since these are the types of problems where rollout techniques are often used and perform well in practice. In this direction, it would be useful to consider problems such as the bin packing problem, the multiple knapsack problem, and the multidimensional knapsack problem.
\bibliographystyle{jota}
\bibliography{average_case_rollout_chop_kp_jota_rev1_arxiv}

\begin{thebibliography}{10}

\bibitem{bertsekas97}
Bertsekas, D.P., Tsitsiklis, J., Wu, C.: Rollout algorithms for combinatorial
  optimization.
\newblock Journal of Heuristics 3, 245--262 (1997)

\bibitem{bertsekas99}
Bertsekas, D.P., Castanon, D.A.: Rollout algorithms for stochastic scheduling
  problems.
\newblock Journal of Heuristics 5, 89--108 (1999)

\bibitem{bertsekas07}
Bertsekas, D.P.: Dynamic Programming and Optimal Control.
\newblock Athena Scientific, 3rd edn. (2007)

\bibitem{bertazzi12}
Bertazzi, L.: Minimum and worst-case performance ratios of rollout algorithms.
\newblock Journal of Optimization Theory and Applications 152, 378--393 (2012)

\bibitem{bt91}
Borgwardt, K., Tremel, B.: The average quality of greedy-algorithms for the
  subset-sum-maximization problem.
\newblock Mathematical Methods of Operations Research 35, 113--149 (1991)

\bibitem{tesauro96}
Tesauro, G., Galperin, G.R.: On-line policy improvement using monte-carlo
  search.
\newblock Advances in Neural Information Processing Systems pp. 1068--1074
  (1997)

\bibitem{secomandi01}
Secomandi, N.: A rollout policy for the vehicle routing problem with stochastic
  demands.
\newblock Oper. Res. 49, 796--802 (2001)

\bibitem{tu02}
Tu, F., Pattipati, K.: Rollout strategies for sequential fault diagnosis.
\newblock AUTOTESTCON Proceedings, pp. 269--295. IEEE (2002)

\bibitem{li09}
Li, Y., Krakow, L.W., Chong, E.K.P., Groom, K.N.: Approximate stochastic
  dynamic programming for sensor scheduling to track multiple targets.
\newblock Digit. Signal Process. 19, 978--989 (2009)

\bibitem{d'atri82}
D'Atri, G., Puech, C.: Probabilistic analysis of the subset-sum problem.
\newblock Discrete Applied Mathematics 4, 329--334 (1982)

\bibitem{pferschy97}
Pferschy, U.: Stochastic analysis of greedy algorithms for the subset sum
  problem.
\newblock Central European Journal of Operations Research 7, 53--70 (1999)

\bibitem{szkatula83}
Szkatula, K., Libura, M.: Probabilistic analysis of simple algorithms for
  binary knapsack problem.
\newblock Control and Cybernetics 12, 147--157 (1983)

\bibitem{szkatula87}
Szkatula, K., Libura, M.: On probabilistic properties of greedy-like algorithms
  for the binary knapsack problem.
\newblock Proceedings of Advanced School on Stochastics in Combinatorial
  Optimization pp. 233--254 (1987)

\bibitem{diubin03}
Diubin, G., Korbut, A.: The average behaviour of greedy algorithms for the
  knapsack problem: general distributions.
\newblock Mathematical Methods of Operations Research 57, 449--479 (2003)

\bibitem{calvin03}
Calvin, J.M., Leung, J.Y.T.: Average-case analysis of a greedy algorithm for
  the 0/1 knapsack problem.
\newblock Operations Research Letters 31, 202--210 (2003)

\bibitem{kellerer04}
Kellerer, H., Pferschy, U., Pisinger, D.: Knapsack problems.
\newblock Springer (2004)

\end{thebibliography}

\end{document}